\documentclass[11pt]{amsart}

\usepackage{amssymb, amsthm, amsmath}
\usepackage{graphicx}
\usepackage[small]{caption}
\usepackage{subcaption}
\usepackage{epsfig}
\usepackage{tikz, pgfplots, float}

\newcommand{\later}[1]{}
\newcommand{\old}[1]{}

\usepackage{amsfonts}

\usepackage[utf8]{inputenc}
\usepackage{fullpage}
\usepackage{framed}

\usepackage{enumerate}
\usepackage{url}
\usepackage{hyperref}

\old{
\setlength\hoffset{0in}
\setlength\voffset{0in}
\setlength\oddsidemargin{0in}
\setlength\evensidemargin{0in}
\setlength\topmargin{0in}
\setlength\headheight{0in}
\setlength\headsep{0in}
\setlength\textwidth{6.5in}
\setlength\textheight{9in}
} 

\newtheorem{theorem}{Theorem}

\newtheorem{observation}[theorem]{Observation}
\newtheorem{proposition}[theorem]{Proposition}

\newtheorem{conjecture}[theorem]{Conjecture}

\newcommand{\etal}{{et~al.}}
\newcommand{\ie}{{i.e.}}
\newcommand{\eg}{{e.g.}}

\newcommand{\opt}{\textsf{OPT}}

\newcommand{\eps}{\varepsilon}

\newcommand{\conv}{{\rm conv}}

\newcommand{\diam}{{\rm diam}}

\newcommand{\NN}{\mathbb{N}} 
\newcommand{\RR}{\mathbb{R}} 

\def\I{\mathcal I}
\def\J{\mathcal J}

\usepackage{tikz}
\tikzstyle{vtx} = [circle, fill, inner sep=0.7]

\title{Avoiding Distances}

\author{Adrian Dumitrescu}
\address{Adrian Dumitrescu \newline
  Algoresearch L.L.C., Milwaukee, WI, USA, and \newline
Research Institute of the University of Bucharest, Romania, and \newline
Alfr\'ed R\'enyi Institute of Mathematics, Budapest, Hungary}
\email{ad.dumitrescu@algoresearch.org}

\author{Zsolt L\'angi}
\address{Zsolt L\'angi \newline
  Bolyai Institute, University of Szeged  and \newline
  Alfr\'ed R\'enyi Institute of Mathematics, Budapest, Hungary}
\email{zlangi@server.math.u-szeged.hu}

\old{
  \thanks{The 2nd named author was partially supported by the National Research,
    Development and Innovation Office, NKFI, K-147544,
    and Projects no. TKP2021-NVA-09 and 2024-1.2.8-T\'ET-IPARI-CN-2025-00011,
    with the support provided by the Ministry of Innovation and Technology of Hungary
    from the National Research, Development and Innovation Fund.}
} 

\date{\today}

\begin{document}

\begin{abstract}
Given a finite planar point set $X=\{x_1,\ldots,x_n\}$, let $a_1<a_2< \cdots < a_s$ denote
the $s \leq {n \choose 2}$ different (Euclidean) distances between points in $X$.
The aim is to retain as many of the points as possible such that certain imposed
inter-pair distances are avoided. 

For example, we show that:

\smallskip
(A) Given a set $X$ of $n$ points in the plane, an independent set of size $0.2293 n$
in the unit distance graph determined by $X$ can be found in $O(n^2)$ time.
For any fixed $k \in \NN$ there exists a constant $c_k>0$
such that for any set $X$ of $n$ points in the plane, and set $D$ of size $k$, there
exists a subset $X' \subseteq X$ of size $|X'| \geq c_k |X|$ in which no pair of points
is at some distance in $D$ (we say that $X'$ \emph{avoids distances} in $D$).
Such a set can be found efficiently.

\smallskip
(B) (i)~There is a subset of $X$ of size $n/2$ that avoids the maximum distance;
this bound cannot be improved.
(ii)~There is a subset of $X$ of size $8n/47$ that avoids both the minimum and maximum distance;
on the other hand, there exists $X$ such that at most $5n/16$ points can be
retained with this property.
(iii)~There is a subset of $X$ of size $n/7$ whose diameter is at most half the original diameter;
on the other hand, there exists $X$ such that at most $n/6$ points can be retained with this property.

\smallskip
(C) (i)~For any $\eps>0$, a $(1-\eps)$-approximation for \textsc{MIS} in the minimum distance graph
of $n$ points in the plane can be found in $2^{O(1/\eps)} n^{O(1)}$ time. 
(ii)~Given a set $X$ of $n$ points in the plane, a maximum independent set in the graph of diameters
can be computed in $O(n^2)$ time.

\smallskip
The techniques used to handle the planar case can be extended to $\RR^d$ for any fixed dimension $d$.

\medskip \noindent
\textbf{\small Keywords}: independence number, distance graph, measurable chromatic number,
Moser spindle, tortoise, Croft's construction. 

\end{abstract} 

\maketitle
\thispagestyle{empty}

\newpage
\setcounter{page}{1}

\section{Introduction} \label{sec:intro}

A subset selection problem is generally of the form: given a finite point
set satisfying condition $\mathcal{A}$, how large of a subset satisfying
condition $\mathcal{B}$ can we find?
For instance if $\mathcal{A}$ is empty and $\mathcal{B}$ is ``no three points are collinear'',
we get the problem of selecting the largest subset in general position, see~\cite{PW13}.
Here we focus on such problems arising from consideration of the set of distances spanned
by an $n$-element point set in the Euclidean space. 

Given a finite planar point set $X=\{x_1,\ldots,x_n\}$, let $A(X) := \{|xy|: x,y\in X,  x\neq y\}$.
The set $A(X)$ is  an  $s$-\emph{distance set} if $|A(X)| = s$, where $s=s(X) \leq {n \choose 2}$. 
In this setting, $A(X)=\{a_1,a_2,\ldots,a_s\}$, where $a_1<a_2< \cdots < a_s$ denote the $s$ different
distances between points in $X$.

We consider several variants. The simplest we know goes back to a question of Erd\H{o}s:
Given $n$ points in $\RR^d$ with minimum distance $1$, let $f(n, d)$ be the largest number such that 
one can retain $f(n, d)$ points so that their minimum distance is $>1$.
It is easy to show that $f(n, d) \leq \lceil \frac{n}{d+1} \rceil$ by considering a set of widely spaced
unit simplices. 
Erd\H{o}s asked for a proof of $f(n, d) \geq c n/d$. Pollack~\cite{Po85} showed that $f(n, 2) \geq n/4$
by a simple planarity argument for the corresponding minimum-distance graph.
No matching bounds are known in any dimension; the current best lower and upper bounds for the planar version, 
$8n/31$ and $5n/16$, are due to  Swanepoel~\cite{Sw02} and Pach and T\'oth~\cite{PT96}, respectively.

An earlier result of Csizmadia~\cite{Cs98} shows the existence of an independent set of size $9n/35$.
Moreover, his proof gives an algorithm for finding a set of this size. His algorithm finds between one and nine
new independent points in each step and runs in $O(n \log{n})$ time. Likewise, a similar algorithm is implicit
in the result of Swanepoel~\cite{Sw02}, as mentioned by the author. 

More generally, given an $n$-element point set $X$, and a subset $D \subseteq A(X)$,
let $G(X,D)$ denote the $n$-vertex graph on $X$ and whose edges are pairs of points in $X$ at a distance
that appears in $D$. 
Further, let $g(X,D)$ denote the maximum size of a subset $Y \subseteq X$ avoiding every distance in $D$;
namely, $g(X,D)$ is the size of a maximum independent set in $G(X,D)$.
Finally, let $g(n,D)$ be the minimum of $g(X,D)$ over all $n$-element point sets $X$;
and $g(n,k)$ be the minimum of $g(X,D)$ over all $n$-element point sets $X$ and $k$-element distance sets
$D \subseteq A(X)$. 

For example, if $X$ is the vertex set of a regular pentagon of unit side-length, then
$g(X,\{1\})=2$ and $g(X,\{1,(\sqrt5 +1)/2\})=1$.
For another example, according to the results on the minimum distance mentioned above, we have
\begin{equation} \label{eq:min}
\frac{8n}{31} \leq g(n,\{a_1\}) \leq \frac{5n}{16}.
\end{equation}

\subsection{Motivating questions and our results.}

We consider several questions of the following kind:

\begin{enumerate} \itemsep 2pt
  
\item Given $X$ and $x \in A(X)$, how many points of $X$ can one retain so that among those points
  no pair is at distance $x$? 
  
\item Given $X$ and a set $D$ of $k$ distances, how many points of $X$ can one retain
  so that among those points no pair is at distance $x \in D$?

\item What is the computational complexity of finding a maximum independent set in a distance graph?
  How well can one approximate the maximum independent set in a distance graph?

\end{enumerate}

We give partial answers for some; however, obtaining precise estimates remains elusive for
most of the questions. Throughout this paper we restrict ourselves to $d=2$.
(Higher dimensions are briefly mentioned in Section~\ref{sec:remarks}.)

\medskip
Among others, we show that:

\smallskip
(i) Given a set $X$ of $n$ points in the plane, an independent set of size $0.2293 n$
  in the unit distance graph determined by $X$ can be found in $O(n^2)$ time
  (Theorem~\ref{thm:unit-approx} in Section~\ref{sec:discrete-single}). 

\smallskip
(ii) For any fixed $k \in \NN$ there exists a constant $c_k>0$
such that for any set $X$ of $n$ points in the plane, and set $D$ of size $k$, there
exists a subset $X' \subseteq X$ of size $|X'| \geq c_k |X|$ in which no pair of points
is at some distance in $D$ (we say that $X'$ \emph{avoids distances} in $D$).
Such a subset can be found efficiently.
(Theorem~\ref{thm:k} in Section~\ref{sec:discrete-multiple}). 

\smallskip
(iii) A $(1/2)$-approximation for \textsc{MIS} in the minimum distance graph of $n$ points in the plane
can be found in $O(n \log{n})$ time. For any $\eps>0$, a $(1-\eps)$-approximation for \textsc{MIS} in
the same graph can be found in $2^{O(1/\eps)} n^{O(1)}$ time.
(Theorem~\ref{thm:min} in Section~\ref{sec:discrete-single}). 

\smallskip
(iv) There is a subset of $X$ of size $n/2$ that avoids the maximum distance; this bound is tight.
  Given a set $X$ of $n$ points in the plane, a maximum independent set in the graph of diameters
  $G(X,\{a_s\})$ can be computed in $O(n^2)$ time.
(Theorems~\ref{thm:max} and~\ref{thm:max-exact} in Section~\ref{sec:discrete-single}). 

\smallskip
(v) There is a subset of $X$ of size $8n/47$ that avoids both the minimum and maximum distance;
on the other hand, there exists $X$ such that at most $5n/16$ points can be retained with this property.
(Theorem~\ref{thm:min+max} in Section~\ref{sec:discrete-multiple}).
For any $\eps>0$, a $(0.5 -\eps)$-approximation for \textsc{MIS} in the corresponding distance graph
can be found in $2^{O(1/\eps)} n^{O(1)}$ time.
(Theorem~\ref{thm:min+max-approx} in Section~\ref{sec:discrete-multiple}).

\smallskip
(vi) There is a subset of $X$ of size $n/7$ whose diameter is at most half the original diameter;
on the other hand, there exists $X$ such that at most $n/6$ points can be retained with this property.
(Theorem~\ref{thm:half} in Section~\ref{sec:interval}).

\section{Preliminaries} \label{sec:prelim}

\subsection{Definitions and notations} \label{subsec:def}

In an undirected graph $G=(V,E)$, let $d_v$ denote the degree of
vertex $v$, and let $\alpha(G)$ be the \emph{independence number} of $G$.
The \emph{independence ratio} of $G$ is the ratio $\alpha(G)/|V|$ between the independence number
of $G$ and the order of $G$. 

The following result is due to Caro and Wei~\cite[Chap.~6]{AS16}.
\begin{equation} \label{eq:caro-wei}
\alpha(G) \geq \sum_{v \in V} \frac{1}{d_v+1} . 
\end{equation} 
Let $\overline{d}$ denote the average degree of the corresponding graph. 
Then~\eqref{eq:caro-wei} further yields that
\begin{equation} \label{eq:caro-wei-avg}
\alpha(G) \geq \sum_{v \in V} \frac{1}{d_v+1}  \geq \frac{n}{\overline{d} +1}.
\end{equation} 

The \emph{chromatic number} of $\RR^d$, denoted $\chi (\RR^d)$, is  the minimum number of colors
needed to color the points of $\RR^d$ so that points at unit distance receive different colors; see,
\eg, \cite{Sz02}. The \emph{measurable chromatic number} of $\RR^d$, denoted $\chi_{\rm m}(\RR^d)$,
is the minimum number of colors using Lebesgue measurable color classes.
Clearly $\chi (\RR^d) \leq \chi_{\rm m}(\RR^d)$. 

Our underlying idea for computing a large independent set in a given unit distance graph
relies on (an estimate of) the maximal fraction of $\RR^d$ that can be covered by a measurable set
that avoids distance $1$, denoted $m_1(\RR^d)$. Formally, 
\begin{equation}  \label{eq:m_1}
  m_1(\RR^d) = \sup_{X \subseteq B_d(r)} \limsup_{r \to \infty} \frac{\lambda(X \cap B_d(r))}{\lambda(B_d(r))},
\end{equation}
where $\lambda$ is the $d$-dimensional Lebesgue measure,
$B_d(r)$ is the ball of radius $1$ centered at the origin in $\RR^d$, and
$X$ is a Lebesgue measurable subset of $\RR^d$, such that no two points in $X$ are at distance~$1$.

Clearly $m_1(\RR^d) \in [0,1]$; moreover, it turns out that  $m_1(\RR^d) >0$ for every $d$.
Erd\H{o}s conjectured that $m_1(\RR^d)  < 2^{-d}$ for every dimension $d$~\cite{Sz02}. 
After a long sequence of successive improvements, the planar case was confirmed only recently
by Ambrus~\etal~\cite{ACM+24}: $m_1(\RR^2) \leq 0.247$; and it remains open for every $d \geq 3$. 
It follows from definitions that  $\chi_{\rm m}(\RR^d) \geq 1/m_1(\RR^d)$. 
For example, packing disks of diameter  $1$ into the vertices of a triangular lattice with edge length $2$
yields $m_1(\RR^2) \geq \pi/(8 \sqrt3)= 0.2267\ldots$. 
A slightly better construction packs \emph{tortoises} of diameter  $1$ into the vertices of a triangular lattice
with edge length slightly less than $2$. A tortoise is the union of a regular hexagon of diameter $1$ and
a concentric disc of diameter slightly less than $1$.
The resulting density is
\begin{equation} \label{eq:croft}
  \rho_{\rm \,Croft}= \sqrt3 \tan{(\theta/2)}, \text{ where } \theta + \sin \theta  = \pi/6.
\end{equation}
This construction, described by Croft~\cite{Cr67},
yields $m_1(\RR^2) \geq \rho_{\rm \,Croft} > 0.2293$. However, Croft attributes it to Leo Moser;
see also~\cite{Sz02}.

An $\alpha$-approximation algorithm for an optimization problem is a polynomial-time algorithm
that for all instances of the problem produces a solution whose value is within a factor of $\alpha$
of the value of an optimal solution. As in~\cite{WS11}, here we follow the convention that
the approximation ratio of an algorithm for a maximization problem is less than $1$.
Throughout this paper all logarithms are in base~$2$.

\subsection{Related work}

As mentioned in several writings~\cite{CKR18,Ra08,Ra12}
distance graphs are among the most studied objects of combinatorial geometry. First of all,
they are at the ground of the classical Hadwiger–Nelson problem of determining the chromatic number
of the space~\cite{BMP05,CFG91,Ha44,KW91}.

\section{Discrete singleton restrictions} \label{sec:discrete-single}

The idea and quality of approximation of our algorithm is intimately related to implementing the
periodic constructions discussed in Section~\ref{sec:prelim} (with disks or tortoises)
placed at the vertex set of a triangular lattice. 

For any distance $x \in A(X)$, one can find a linear-size subset avoiding that distance
(by Theorem~\ref{thm:unit}). 
From the other direction, there exist $X$, $|X|=n$,  and $x \in A(X)$ such that
$g(X,\{x\}) \leq 2n/7= 0.285\ldots n$. A classic example is as follows: consider $n/7$ disjoint copies
of the Moser spindle with seven points. It is easy to check that each copy can contribute at most two points
to an independent set in $X$. Let $v$ denote the rotation vertex (of degree $4$) of the spindle $S$ and
let $\I \subset S$ be an independent set.
If $v \in \I$, only one of the two vertices at distance two from $v$ can be in $\I$.
If $v \notin \I$, $S \setminus \{v\}$ make two disjoint triangles in $S$. At most one vertex
from each can belong to $\I$, as claimed.

About 40 years ago, Erd\H{o}s~\cite{Er87}  asked whether $g(X,\{x\}) \geq n/4$ for all $n$. 
The question received a negative answer, but only very recently~\cite{DV26}. 
This answer is also part of our next result: 

\begin{theorem} \label{thm:unit}
  For any distance $x \in A(X)$ we have $g(n,\{x\}) \geq 0.2293 n$. 
  On the other hand, there exist $X$, $|X|=n$, and $x \in A(X)$, and $c<1/4$, 
  such that $g(X,\{x\}) \leq c n$.
\end{theorem}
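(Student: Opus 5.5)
By scaling we may assume $x=1$, so $G(X,\{1\})$ is the unit distance graph on $X$. The lower bound will come from averaging over translates of Croft's periodic construction. Fix $\eps>0$ and let $T\subset\RR^2$ be the measurable set obtained by placing tortoises of diameter $1$ at the vertices of a triangular lattice $\Lambda$ with edge length slightly less than $2$, as in~\eqref{eq:croft}. Within a tortoise all distances are $<1$, except that two opposite hexagon vertices are at distance exactly $1$. I would shrink the tortoise very slightly, or use half-open boundary conventions, so that no two points of $T$ are at distance exactly $1$. The lattice spacing already forces distinct tortoises to be at mutual distance $>1$. This keeps the density of $T$ at least $\rho_{\rm \,Croft}-\eps$.

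Let $F$ be a fundamental domain of $\Lambda$ and choose a translation vector $t$ uniformly at random in $F$. Then each point $x_i\in X$ lies in $T+t$ with probability equal to the density of $T$, i.e.\ $\lambda(T\cap F)/\lambda(F)$. By linearity of expectation,
\[
\mathbb{E}\,|X\cap (T+t)| \geq (\rho_{\rm \,Croft}-\eps)\, n .
\]
Hence some $t$ gives $|X\cap(T+t)|\geq(\rho_{\rm \,Croft}-\eps)n$. The set $X\cap(T+t)$ avoids distance $1$ because $T+t$ does.

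Since $\rho_{\rm \,Croft}>0.2293$ strictly, we can choose $\eps$ small enough that $\rho_{\rm \,Croft}-\eps\geq 0.2293$. This gives $g(X,\{1\})\geq 0.2293n$ for every $X$, and hence $g(n,\{x\})\geq 0.2293n$.

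The step needing the most care is this boundary issue: the closed tortoise contains pairs at distance exactly $1$, so the set must be modified to be strictly $1$-avoiding at an arbitrarily small cost in density. Shrinking by a factor $1-\delta$ and letting $\delta\to0$ handles this, since the density varies continuously in $\delta$.

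For the upper bound, I would not build a new construction. Erd\H{o}s's question $g(X,\{x\})\geq n/4$ was recently answered negatively in~\cite{DV26}, and I would simply invoke that result. It provides a finite unit distance graph $H$, realized in the plane on $m$ points, with independence number $\alpha(H)\leq cm$ for some $c<1/4$. Taking $n/m$ widely separated disjoint translates of $H$ gives a set $X$ with $g(X,\{1\})\leq cn$. The reason is that the translates are far enough apart that no new unit distances appear between them, so the independence number adds over the copies. This disjoint-union step is routine; the only real content is the existence of $H$. If a self-contained argument were wanted, I would start from a graph of the Moser-spindle type, which gives the ratio $2/7$, and search computationally for denser configurations, but I expect that to be the genuinely hard part and would defer it to~\cite{DV26}.
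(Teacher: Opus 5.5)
Your proposal is correct and follows the paper's proof. The lower bound uses a uniformly random translate of Croft's periodic tortoise packing together with linearity of expectation; translating the packing rather than $X$ is equivalent. The upper bound uses widely separated disjoint copies of the D{\'u}cz--Varga graph.

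Your shrinking step is a sound refinement that the paper omits, and it covers more than you say: a tortoise shrunk about its centre lies in the interior of the original, so all distances between different tortoises also become strictly greater than $1$, which matters because in Croft's optimal arrangement facing sides of neighbouring closed tortoises are at distance exactly $1$, not more.
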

\begin{proof}
\emph{Lower bound.}  After a suitable scaling we may assume that $x=1$.
Let $X$ be any $n$-element point set. 
Consider the periodic tortoise-construction described in the preliminaries.
Let $P$ be a fundamental parallelogram of the lattice generating this arrangement.
Since $X$ is bounded, it is contained in the union of finitely many translates of $P$.
Consider a randomly shifted copy of $X$ by a vector from the origin to a uniformly chosen point of $P$.
The expected number of points in the tortoises is equal to $n$ times the area ratio, namely at least
$0.2293 n$. The lower bound is implied.

\begin{figure}[htbp]
 \centering
 \includegraphics[scale=0.8]{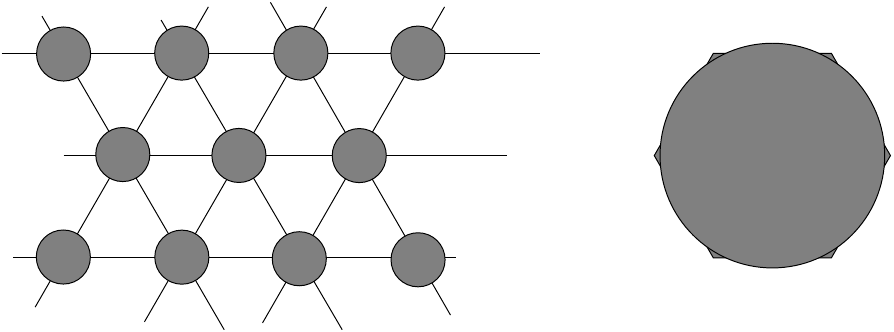}
 \caption{Left: open disks of unit diameter at the vertices of a triangular lattice.
Right: a tortoise.}
 \label{fig:disks}
\end{figure}

\medskip
\emph{Upper bound.} By a very recent result of D{\'u}cz and Varga~\cite{DV26}, there exists a
unit distance graph $G_0$ in the plane such that $\rho_0:= \alpha(G_0)/|V(G_0)| <1/4$.
Let $n_0=|V(G_0)|$. 
Construct an $n$-element point set as follows; for simplicity assume that $n$ is divisible  by $n_0$. 
Consider $n/n_0$ disjoint copies of $G_0$. Each copy can contribute at most $\rho_0 n_0$
points to an independent set in $X$. Overall the size of any independent set is at most
$n/n_0 \cdot \rho_0 n_0 \leq \rho_0 n $. This proves the upper bound.
Set $c:= \rho_0$. Therefore there exist $X$, $|X|=n$, and $x \in A(X)$, and $c<1/4$, 
  such that $g(X,\{x\}) \leq c n$.
\end{proof}

Until now, the previous bound implied by a result of Matolcsi, Ruzsa, Varga, and Zs{\'a}mboki~\cite{MRVZ25},
was that for every $\eps>0$, there exist $X$, $|X|=n$, and $x \in A(X)$ such that $g(X,\{x\}) \leq (0.25 +\eps) n$.
The challenging question remaining is that of finding a tight bound.
Before the recent breakthrough of D{\'u}cz and Varga~\cite{DV26} it was less clear what 
the correct bound should be. Perhaps now that the $1/4$ barrier has been crossed,
it can be conjectured that the lower bound implied by Croft's construction should be tight. 

In the next subsection we show that, given a finite point set in the plane,
a $0.2293$-factor approximation of \textsc{MIS} in the corresponding unit distance graph
can be computed efficiently.

\medskip
The case of the minimum distance was mentioned in the introduction. Next we address
the case of the maximum distance.

\begin{theorem} \label{thm:max}
  For the maximum distance $a_s$ we have
  \begin{equation} \label{eq:max}
      g(n,\{a_s \} ) = \left\lfloor \frac{n}{2} \right\rfloor.
  \end{equation}
\end{theorem}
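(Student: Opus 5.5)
We prove the two inequalities $g(n,\{a_s\}) \geq \lfloor n/2 \rfloor$ and $g(n,\{a_s\}) \leq \lfloor n/2 \rfloor$ separately. Throughout, $X$ is an $n$-element planar set, $a_s = \diam(X)$, and $G = G(X,\{a_s\})$ is the graph of diameters.

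\emph{Lower bound.} The plan is to show that $G$ is bipartite up to a harmless exception, and then take the larger color class. Two classical facts about diameter graphs in the plane will be used.

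First, any two diameter segments of $X$ intersect, either crossing or sharing an endpoint. Indeed, if two diameters $pq$ and $rs$ were disjoint, then the four points are in convex position, and one of the two diagonals of the quadrilateral $\conv\{p,q,r,s\}$ would be longer than $a_s$. The standard argument: in a convex quadrilateral the sum of the diagonals exceeds the sum of either pair of opposite sides.

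Second, from this one derives (Hopf--Pannwitz) that every cycle of $G$ is odd, and in fact that the nontrivial part of $G$ is at most a single odd cycle with pendant trees attached. The standard argument shows that $G$ contains no even cycle and no two vertex-disjoint odd cycles. Two vertex-disjoint cycles would contain disjoint diameters, and an even cycle would force a pair of non-intersecting diameters.

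Given this structure, each connected component of $G$ is either a tree or a unicyclic graph whose cycle is odd. A tree on $m$ vertices has an independent set of size $\lceil m/2 \rceil$. A unicyclic component on $m$ vertices, with odd cycle of length $\ell \geq 3$, has one as well: delete one cycle vertex to obtain a forest on $m-1$ vertices, whose larger color class has at least $\lceil (m-1)/2 \rceil \geq \lfloor m/2 \rfloor$ vertices. Summing over components gives $\alpha(G) \geq \lfloor n/2 \rfloor$. A more direct route is to choose the deleted cycle vertex $v$ so that the bipartition of $G - v$ avoids putting $v$'s neighbors on both sides badly; I would simply use the count above. This yields $\alpha(G) \geq \lceil (n-1)/2 \rceil = \lfloor n/2 \rfloor$.

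\emph{Upper bound.} Take $X$ to consist of $\lfloor n/2 \rfloor$ pairs of points on a circle of diameter $1$, with the pairs antipodal and distinct, together with one extra generic interior point when $n$ is odd. Then $a_s = 1$, the diameter graph is a perfect matching on the antipodal points, and any independent set takes at most one point per pair, plus possibly the extra point. This gives $\lfloor n/2 \rfloor$ for even $n$ and $\lfloor n/2 \rfloor + 1$ for odd $n$.

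To get exactly $\lfloor n/2 \rfloor$ for odd $n$, I would instead use one Reuleaux-type triangle: an equilateral triangle of side $1$ (which contributes only $1$ independent point from $3$ vertices) together with $(n-3)/2$ antipodal pairs on a circle of diameter $1$. These pairs must be arranged so that all pairwise distances stay at most $1$, for instance by placing the triangle vertices and the pairs suitably within a Reuleaux triangle and its arcs. This gives $1 + (n-3)/2 = \lfloor n/2 \rfloor$.

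\emph{Main obstacle.} The step I expect to be hardest is realizing the odd case geometrically: ensuring that the extra vertices adjacent to the triangle, placed on the Reuleaux arcs, create diameter edges forming a graph of independence number $\lfloor n/2 \rfloor$ while no distance exceeds $1$. I would try placing the points on an arc of the Reuleaux triangle centered at one vertex, pairing them via the opposite vertex. An alternative is to check whether a regular $(2k+1)$-gon's diameter graph, which is an odd cycle with independence number $k$, already suffices: it does, since an odd cycle on $n$ vertices has independence number exactly $\lfloor n/2 \rfloor$. That simpler construction handles odd $n$ directly, and even $n$ is covered by the antipodal pairs.
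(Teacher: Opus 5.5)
Your proof is correct, and at bottom it is the paper's argument with a different key lemma. The paper cites Dol'nikov's theorem that for some $x\in X$ the set $X\setminus\{x\}$ splits into two parts of smaller diameter, which says exactly that $G-x$ is bipartite, and then keeps the larger part. You get the same conclusion from the classical structure of the diameter graph (no even cycle, at most one cycle): delete a vertex of the odd cycle, leaving a forest on $n-1$ vertices. Both routes end with $\lceil (n-1)/2\rceil=\lfloor n/2\rfloor$. The citation route is shorter. Yours is self-contained apart from the crossing property of diameters, and it uses the same structure the paper itself relies on later in Theorem~\ref{thm:max-exact}. Your final upper-bound construction (regular $n$-gon for odd $n$, antipodal pairs for even $n$) is the paper's, and the Reuleaux detour can simply be deleted.

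One supporting sentence is wrong and should be replaced. Edges of two vertex-disjoint cycles share no endpoint, but by your first fact they then just cross. So ``vertex-disjoint cycles would contain disjoint diameters'' gives no contradiction.

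The right argument is a parity one. A point of $X$ of positive degree cannot lie in the relative interior of a diameter, so a diameter $e$ sharing no vertex with an odd cycle $C$ crosses every edge of $C$ at interior points. The line through $e$ then separates the two endpoints of every edge of $C$, i.e., it properly $2$-colors $C$, which is impossible. Applied to an edge of one odd cycle that avoids the common vertex, the same argument excludes two odd cycles meeting in a single vertex. Your stated facts (no even cycle, no two vertex-disjoint odd cycles) do not rule that case out, but your tree-or-unicyclic case analysis requires it.

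Similarly, for an even cycle $v_1\cdots v_{2k}$, the $2k-3$ edges of the path $v_3\cdots v_{2k}$ all cross $v_1v_2$. Hence $v_3$ and $v_{2k}$ lie on opposite sides of the line $v_1v_2$, and then $v_2v_3$ and $v_{2k}v_1$ are disjoint, a contradiction.
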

\begin{proof}
  \emph{Lower bound.}
  We use the following result of Dol'nikov \cite[Corollary 1]{D00}: for any finite point set $X$ in the plane,
  there is a point $x \in X$ such that $X \setminus \{ x \}$ can be partitioned into two parts of smaller diameters.
  Let us choose $x$ according to this, where $X \setminus \{ x \}$ is partitioned into the smaller
  diameter sets $X_1, X_2$. Let the number of elements of $X, X_1, X_2$ be $n, n_1, n_2$, respectively, where
  we assume that $n_1 \leq n_2$. We choose $X_2$ for an independent set. Then, if $n$ is odd, $n_1+n_2 = n-1$ yields
  $n_2 \geq \frac{n-1}{2}$. On the other hand, if $n$ is even, then $n_1 + n_2 =n-1$ is odd, implying that
  $n_1 \neq n_2$; that is, $n_1 \leq n_2-1$. Thus, $2n_2-1 \geq n-1$, from which $n_2 \geq \frac{n}{2}$ follows. 

\medskip
\emph{Upper bound.} Let $X$ be the vertex set of a regular $n$-gon.
 If $n$ is even, the diameter graph consists of a perfect matching of $n/2$ pairwise crossing edges.
 In this case, the size of any independent set is at most $n/2$. 
   If $n$ is odd, the diameter graph consists of a Hamiltonian cycle of $n$ edges, with no two disjoint edges.
 In this case, the size of any independent set is at most $(n-1)/2$. 
\end{proof}

\subsection{Approximation algorithms} \label{subsec:approx}

Our main result in this subsection is the following.

\begin{theorem} \label{thm:unit-approx}
  Given $X$ and $x \in A(X)$, where $|X|=n$, an independent set of size $0.2293 n$ in $G(X,\{x\})$
  can be found in $O(n^2)$ time.
\end{theorem}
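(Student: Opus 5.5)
We derandomize the averaging argument from the lower bound in Theorem~\ref{thm:unit}. After scaling we may assume $x=1$. Let $T$ be the union of tortoises placed at the vertices of the triangular lattice $\Lambda$ (Croft's construction), and let $P$ be a fundamental parallelogram of $\Lambda$. For a shift vector $t\in P$, the translate $X+t$ meets $T$ in an independent set of $G(X,\{1\})$: every connected component of $T$ has diameter less than $1$, and distinct components are more than $1$ apart. Hence every $t$ yields a feasible solution $X_t=\{p\in X: p+t\in T\}$. We have already seen that $\mathbb{E}_t|X_t| = n\,\rho_{\rm \,Croft} > 0.2293\,n$ when $t$ is uniform in $P$, so some $t$ satisfies $|X_t|\ge \rho_{\rm \,Croft}\, n$. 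The task is to find such a $t$ explicitly.

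\textbf{Step 1: reduce to an arrangement in the torus $P$.} For each $p\in X$, the set $S_p=\{t\in P: p+t\in T\}$ equals $(T-p)$ taken mod $\Lambda$. This is a single tortoise, wrapped around the torus, whose center is at $-p \bmod \Lambda$; a tortoise fits in one cell up to wraparound. The quantity $|X_t|$ is the depth of $t$ in the arrangement of the $n$ regions $S_p$. Since every region has the same area, $\rho_{\rm \,Croft}|P|$, the arrangement contains a point of depth at least $\rho_{\rm \,Croft}\,n$.

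\textbf{Step 2: compute a deepest point.} Each region is bounded by $6$ hexagon segments and $6$ circular arcs of the concentric disk. Two regions therefore cross in $O(1)$ points, and the arrangement of the $n$ boundaries on the torus has $O(n^2)$ vertices, edges and faces. We build this arrangement, then run a traversal such as BFS over the dual graph, updating the depth by $\pm1$ each time we cross a boundary edge. This gives the depth of every face, and we return a face of maximum depth together with its shift $t$ and the set $X_t$.

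\textbf{Main obstacle: the running time.} A generic arrangement construction costs $O(n^2\log n)$ or carries constant overhead in the algebraic primitives, so reaching exactly $O(n^2)$ needs care. My first approach is to note that all regions are translates of a single convex-like shape $S$. Computing the maximum depth of $n$ translates of a fixed shape on the torus is then a ``maximum coverage by translates'' problem, and topological sweep or the incremental zone theorem builds such arrangements of pseudo-circles in $O(n^2)$ time. If that argument does not go through cleanly, the fallback is to discretize: the Croft density has slack above $0.2293$. We could replace the disk part by a polygon with $O(1)$ sides, giving a slightly smaller but still admissible density, so that the regions become polygons with $O(1)$ edges. Their arrangement can then be built in $O(n^2)$ time by standard line-arrangement techniques.

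Finally, we must handle points of $X$ that lie on tortoise boundaries. Since the Croft construction uses open tortoises, choosing $t$ in the interior of a face avoids boundary degeneracies, and $X_t$ is independent.
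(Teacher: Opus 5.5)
Your proposal is correct and is essentially the paper's argument. The paper replicates the points over a $3\times 3$ block of lattice cells and finds a translate of a tortoise covering the most points, which is exactly the deepest point of your wrapped-tortoise arrangement on the torus; it gets the $O(n^2)$ bound by citing the optimal-placement algorithm of Agarwal et al.\ instead of building the arrangement itself. Your first running-time justification (zone theorem or sweep for pseudo-circles) is not clearly $O(n^2)$, but your polygonal fallback is sound once you lift to the plane as the paper's board does: the polygon vertices are intersections of the $O(n)$ supporting lines, so the line arrangement, built in $O(n^2)$ time, refines the arrangement you need.
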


We first prove a slightly weaker bound, where $0.2293 n$ is replaced by  $0.2267 n$.
Let $|X|=n$, and let $I =I(X) \subset X$ be an (unknown) maximum independent set in $G(X,\{1\})$.
Recall that packing disks of diameter  $1$ into the vertices of a triangular lattice with edge length $2$
yields $m_1(\RR^2) \geq \pi/(8 \sqrt3)= 0.22672\ldots$. With foresight, we slightly adjust down
the disk size; specifically, set $\eps_1=10^{-5}$, and pack disks of radius $1/2 -\eps_1$;
the resulting lower bound becomes $\pi (0.5 -\eps_1)^2/ (2 \sqrt{3}) =  0.22671\ldots$.
Write $a=0.2267$. By the argument in Theorem~\ref{thm:unit}, we have $\opt=|I| \geq a n$,
and next we give an algorithm that computes an independent set of size $\geq an$ using the slightly
smaller disks. 

\begin{proof}[Proof of Theorem~\ref{thm:unit-approx}]
(i) Consider the following algorithm:

\medskip     
\noindent{{\bf Algorithm~\texttt{A}}} 
Input: a set $X$ of $n$ points in the plane
\begin{itemize} \itemsep 2pt
\item  [] Step 0. Set $\eps_1=10^{-5}$ and $r_1 =0.5 -\eps_1$. 
  Lay out (in an arbitrary fashion) a triangular lattice of side length $2$.
  Select a reference (rhombus shaped) cell anchored at the origin and its eight neighbor cells
  in the lattice. Refer to this nine-cell section as the ``board''. 
\item  [] Step 1. Map the $n$ input points to the reference cell and replicate the points in this cell
  over the other eight cells of the board. [Overall, the board contains at most $9n$ distinct points.]
\item  [] Step 2. Find a (closed) disk of radius $r_1$, call it $D_1$, containing the largest number
  of points on the board (details below). 
\item  [] Step 3. Use its center as the translation vector for the lattice:
  translating the lattice by this vector ensures that the disks of radius $r_1$ centered
  at the lattice points collectively contain the maximum number of points from $X$.
\item  [] Step 4. Output the subset of $X$ corresponding to the points contained in $D_1$
  as an independent set.  
\end{itemize}

\begin{figure}[htbp]
 \centering
 \includegraphics[scale=0.66]{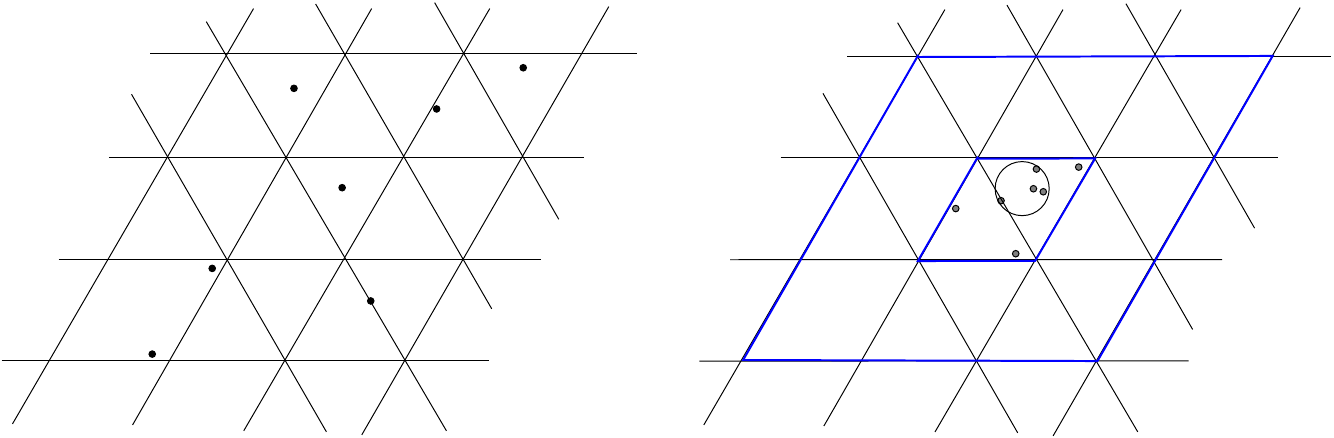}
 \caption{Left: A set of seven points in the plane and the underlying triangular lattice. 
   Right: a disk of radius $r_1$ enclosing the maximum number of points in the reference cell
   (or its neighbors).} 
 \label{fig:tiling}
\end{figure}

\emph{Algorithm analysis.}
The algorithm correctness follows from the construction: the optimal translate of the system of disks
of radius $r_1$ centered at the vertices of a triangular lattice corresponds to finding a disk of radius $r_1$
that covers the largest number of points on the board. It is worth noting that there is always an optimal disk
that intersects the central cell of the board anchored at the origin. We show that such a disk contains at
least $an$ points. Indeed, the fact that this disk contains $m$ points is equivalent to the fact that a
translate of the original lattice packing of disks containing the largest number of points contains $m$
points. On the other hand, since a random translate of this packing contains any given point with
probability $0.22671\ldots > a$, and by the linearity of expectation, the expected number of points contained
in a random translate of the lattice packing is at least $an$, proving our statement.

Step 2 runs an exact quadratic-time algorithm by Chazelle and Lee~\cite{CL86}.
Gajentaan and Overmars~\cite{GO95}  demonstrated that determining the deepest point in an arrangement of
$n$ unit disks is 3SUM-hard, meaning that any exact subquadratic algorithm is highly unlikely;
see also the paper by de Berg~\etal~\cite{BCP09}.

The slightly better approximation with ratio $0.2293$ stated in Theorem~\ref{thm:unit-approx}
can be obtained by packing tortoises of unit diameter into the vertices of a triangular lattice
with edge length $2$. Step~2 of Algorithm~\texttt{A} is replaced by:
Find a translate of a tortoise that contains the largest number of points on the board.
The algorithm correctness follows in light of the previous argument.
Finding such a translate amounts to solving a so-called \emph{optimal-placement} problem for maximizing
point containment, for which the algorithm of Agarwal~\etal~\cite[Section~2]{AHR+02} can be used.
Since the boundary of a tortoise consists of $18=O(1)$ edges (namely $12$ line segments and $6$ circular arcs),
the algorithm runs in $O(n^2)$ time~\cite[Corollary~1]{AHR+02}.
\end{proof}

\medskip
\emph{Remark.}
Given a set $X$ of $n$ points, it is natural to ask whether computing a random sample of size $t$
(say, for a small $t \leq n^{1/3}$) may provide a better approximation, if not for finding a \textsc{MIS},
at least for estimating the independence number of $G(X,\{1\})$. For instance, consider an algorithm that
first computes a uniform random sample of a size $t$ and then runs an exact algorithm for \textsc{MIS}
on this sample. One option for the latter step is using the recursive algorithm of Robson~\cite{Ro86},
which runs in $O(2^{0.276 t})$ time on an input of size $t$. Let now $X$ consist of $n/7$ separated copies
of the Moser spindle. A~random uniform sample will select $t$ points from different copies
with high probability, and so the sample will be an independent set with high probability.
Indeed, the probability that the sample contains a pair of points from the same copy is at most
$n/7 \cdot  {7 \choose 2} {n-2 \choose t-2} \big{/} {n \choose t} = O(t^2/n) = o(1)$.
This could suggest the existence of a large independent set in~$X$. However, this intuition is erroneous,
since we have $g(X,\{1\})=2n/7$, which is not far from the minimum possible. A similar outcome can be
enforced for other unit distance graphs.

\subsection{Special cases} \label{subsec:special}

Better approximations or an exact solution can be obtained in special cases, such as
for the minimum or maximum distance, respectively.

\begin{theorem} \label{thm:min}
Given a set $X$ of $n$ points in the plane:
 \begin{enumerate} [{\rm (i)}]
 \item A $1/2$ approximation for \textsc{MIS} in the minimum distance graph $G(X,\{a_1\})$ can be found
   in $O(n \log{n})$ time.
 \item For any $\eps>0$, a $(1-\eps)$-approximation for \textsc{MIS} in $G(X,\{a_1\})$ can be found
   in $2^{O(1/\eps)} n^{O(1)}$ time.
\end{enumerate}
\end{theorem}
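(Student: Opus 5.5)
Part (i) rests on planarity of the minimum distance graph $G=G(X,\{a_1\})$, as in Pollack's argument~\cite{Po85}. Scale so that $a_1=1$. Two edges of length $1$ cannot cross: if $ab$ and $cd$ crossed, then by the triangle inequality $|ac|+|bd|<|ab|+|cd|=2$ or a similar inequality would hold, so some pair among $a,b,c,d$ would be at distance $<1$, contradicting minimality. Hence $G$ is planar. Moreover, every minimum-distance pair is an edge of the Delaunay triangulation, since two closest points are mutual nearest neighbours. So $G$ can be obtained in $O(n\log n)$ time by computing the Delaunay triangulation and keeping the edges of length $a_1$.

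A planar graph is $4$-colourable, so $\alpha(G)\ge n/4$, but finding a $4$-colouring is not needed for a $1/2$-approximation. It suffices to produce an independent set of size at least $n/4$ when $\opt\le n/2$, or of size at least $\opt/2$ in general. The intended route uses the stronger structure of $G$: it has maximum degree at most $6$, and it is in fact $3$-degenerate, because a vertex on the convex hull of $X$ has degree at most $3$ (its neighbours lie within an angle less than $\pi$, pairwise separated by angle at least $\pi/3$). Greedily peeling hull vertices therefore gives a $4$-colouring in linear time after sorting. The largest colour class has size at least $n/4$. Since $\opt\le \lceil n/2\rceil$ fails in general, I would instead bound $\opt$ in terms of the structure. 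Alternatively, use the bound $\opt\le n/2$ plus small cases: every vertex of degree at least $1$ lies on an edge, and isolated vertices can all be added. Apply the colouring only to the non-isolated part $G'$ on $n'$ vertices, add all isolated vertices, and compare with $\opt\le (n-n') + \alpha(G')$. The step I expect to be the main obstacle is bounding $\alpha(G')$ by $n'/2$. To handle it I would argue that $G'$ is covered by edges, but edge-covered graphs like stars can have $\alpha$ near $n'$. So I would rather run the greedy minimum-degree algorithm on $G$: repeatedly take a vertex of minimum degree (in the current graph) and delete it together with its neighbours. Since bounded-degree degenerate planar graphs admit the standard analysis, greedy achieves ratio $\ge 1/2$ after an accounting argument over neighbourhoods of the optimum (each greedy step removes at most two optimal vertices, using that the neighbourhood of a minimum-degree vertex in a penny graph induces few independent vertices, at most $2$ when the degree is at most $3$ on the current hull). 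Implemented with a heap on the Delaunay-derived graph, this runs in $O(n\log n)$.

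Part (ii) is Baker's shifting technique for planar graphs, and I would apply it directly. Compute a BFS layering of each component of $G$ in linear time and set $k=\lceil 1/\eps\rceil$. For each shift $j\in\{0,\dots,k-1\}$, delete the layers congruent to $j \pmod k$. Each remaining piece is $k$-outerplanar and so has treewidth $O(k)$, and a maximum independent set on it is found by dynamic programming in $2^{O(k)}n$ time. Some shift deletes at most $\opt/k$ vertices of a fixed optimum, so the best union has size at least $(1-1/k)\opt\ge(1-\eps)\opt$. The total time is $2^{O(1/\eps)}n^{O(1)}$, as claimed. Only planarity from part (i) is needed here.
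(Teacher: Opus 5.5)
Your part (ii) is the paper's argument (Baker's shifting technique \cite{Ba94} applied to the plane graph $G(X,\{a_1\})$). Your planarity proof and your $O(n\log n)$ construction of $G$ are also fine. Part (i), however, has a genuine gap.

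The paper simply runs the $O(n\log n)$ planar $1/2$-approximation of Chiba, Nishizeki and Saito \cite{CNS82} on $G$. You instead use minimum-degree greedy. You justify the ratio $1/2$ by the local claim that a vertex $v$ of degree at most $3$ on the current hull satisfies $\alpha(G[N(v)])\le 2$, so each step deletes at most two optimal vertices. That claim is false. A hull vertex $v$ can have three neighbours in directions $0$, $2\pi/5$, $4\pi/5$, all within an angle less than $\pi$. Their pairwise distances are $2\sin(\pi/5)\approx 1.18$ and $2\sin(2\pi/5)\approx 1.90$. So $N(v)$ is an independent triple that may lie entirely in the optimum.

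Besides, the vertex greedy picks need not be a hull vertex at all. An interior degree-$3$ vertex with neighbours $2\pi/3$ apart is equally bad. So one step can cost three optimal vertices while gaining one, and nothing in your sketch shows that such losses are compensated later. The general worst-case guarantee for min-degree greedy on graphs of maximum degree $6$ falls short of $1/2$, so ``the standard analysis'' does not supply it. You were right to abandon the $4$-colouring route, since $\opt$ can be close to $n$.

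There are two ways to repair this. You can cite \cite{CNS82}, as the paper does. Alternatively, you can reuse your own part (ii) with $k=2$:
\begin{enumerate}
\item In each component, take BFS layers $L_0,L_1,\ldots$.
\item Every $G[L_i]$ is outerplanar. Contracting the connected set $L_0\cup\cdots\cup L_{i-1}$ to a single vertex gives a planar minor in which that vertex is adjacent to all of $L_i$.
\item Hence a maximum independent set of each layer is computable in linear time.
\item No edges join non-consecutive layers. So the union of the layer solutions over even layers is independent, and likewise over odd layers; the better of the two has size at least $\opt/2$.
\end{enumerate}
The running time is dominated by the $O(n\log n)$ construction of $G$.
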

\begin{proof}
The graph of minimum distances $G(X,\{a_1\})$ can be computed in $O(n \log{n})$ time by (an extension of)
the classic algorithm of Shamos~\cite[Chap.~5.4]{PS85}.
This graph is plane and so any approximation algorithm for finding a \textsc{MIS} in planar graphs can be used.

(i)~The algorithm of Chiba, Nishizeki, and Saito~\cite{CNS82} computes a $1/2$ approximation for \textsc{MIS}
on any planar graph with $n$ vertices and runs in $O(n \log{n})$ time. 

(ii)~A polynomial time approximation scheme (PTAS) due to Baker~\cite{Ba94} computes a $(1-\eps)$-approximation
in $2^{O(1/\eps)} n^{O(1)}$ time. Note that Baker's algorithm requires a planar embedding of the graph.
As shown by Marx~\cite{Ma07}, this running time is essentially the best possible under the Exponential
Time Hypothesis (ETH).
\end{proof}

\begin{theorem} \label{thm:max-exact}
  Given a set $X$ of $n$ points in the plane, a maximum independent set in the graph of diameters
  $G(X,\{a_s\})$ can be computed in $O(n^2)$ time.
\end{theorem}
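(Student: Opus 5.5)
The plan is to use the structure of the planar diameter graph: every odd cycle in $G(X,\{a_s\})$ shares points with every other (Hopf--Pannwitz type facts). Concretely, I would rely on two classical properties of $G=G(X,\{a_s\})$. First, any two diameter segments intersect (they either share an endpoint or cross). Second, $G$ has at most $n$ edges, and, as in Dol'nikov's theorem quoted in the proof of Theorem~\ref{thm:max}, $G$ contains at most one cycle, and that cycle is odd. So each connected component of $G$ is either a tree or unicyclic with an odd cycle. Moreover, since all diameter segments pairwise intersect, at most one component can contain edges; every other vertex is isolated.

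\textbf{Step 1 (build the graph).} Compute all $\binom{n}{2}$ distances, determine $a_s$, and list the pairs at distance $a_s$. This takes $O(n^2)$ time and yields $G$ with at most $n$ edges. (A faster routine via farthest-point Voronoi diagrams exists but is not needed.)

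\textbf{Step 2 (structure).} I would justify that $G$ has at most one cycle, and that the cycle is odd. The argument: two vertex-disjoint cycles would contain disjoint diameter segments that do not cross, which is a contradiction. An even cycle, or two cycles sharing vertices, would yield a $2$-coloring that contradicts Dol'nikov's partition, or alternatively the classical fact that the diameter graph has at most $n$ edges with equality only when the unique cycle is odd. If this is too delicate, the algorithm does not need the structural claim in full. It only needs that every component of $G$ has at most one more edge than a tree, i.e.\ $|E|\le |V|$ per component. This follows from the Hopf--Pannwitz bound applied to each component's vertex set, whose diameter is still $a_s$ whenever the component has an edge.

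\textbf{Step 3 (exact MIS).} The algorithm computes a maximum independent set in each component and takes the union:
\begin{itemize}
\item Isolated vertices are always taken.
\item On a tree component, the standard linear-time dynamic program (with and without the root) gives the MIS.
\item On a unicyclic component, pick an edge $uv$ on the cycle. A maximum independent set either excludes $u$ or excludes $v$. So run the tree DP on $T-u$ and on $T-v$, where $T$ is the component minus the edge $uv$, and take the better of the two.
\end{itemize}
Each of these runs is linear in the component size, so the total is $O(n)$ after Step 1, and the overall running time is $O(n^2)$.

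The main obstacle is Step 2, establishing rigorously that each component has cyclomatic number at most one. My first approach is to invoke the Hopf--Pannwitz theorem, that a diameter graph on $m$ points has at most $m$ edges, on each component separately. Then $|E(C)|\le |V(C)|$ for every component $C$, which is exactly what the DP in Step 3 requires.
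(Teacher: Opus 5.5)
Your argument is correct, and it takes a genuinely different route from the paper's. The paper builds $G$ in $O(n\log n)$ time from $\conv(X)$. It then relies on the full structure of planar diameter graphs: either $G$ is a forest, or it is a single odd cycle $C=\{v_1,\ldots,v_k\}$ with pendant leaves. It branches on whether the MIS avoids $C$ (size $n-k$) or contains some $v_i$ (size $1+\alpha(F_i)$, where $F_i$ is the forest left after deleting $v_i$ and its neighbours). That costs $k$ forest DPs, hence $O(n^2)$.

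You pay $O(n^2)$ up front to build $G$ by brute force, but afterwards you need only two things:
\begin{itemize}
\item The Hopf--Pannwitz bound applied to the vertex set of each component $K$. Its diameter is still $a_s$ and its diameter graph is exactly $K$, so $K$ has cyclomatic number at most one.
\item A branch on a single cycle edge $uv$, via $\alpha(K)=\max\{\alpha(K-u),\alpha(K-v)\}$.
\end{itemize}
So the MIS step is $O(n)$ once $G$ is known. This buys two things. First, you never need the hairy-cycle description, the oddness of the cycle, or the fact that no other component carries edges when a cycle exists; the paper simply cites these as well known. Second, combined with the $O(n\log n)$ construction of $G$ that the paper mentions, your method gives $O(n\log n)$ overall rather than $O(n^2)$.

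Several side claims in your write-up are wrong, although the algorithm never uses them, so you should delete them and keep only the per-component Hopf--Pannwitz argument.
\begin{itemize}
\item ``At most one component can contain edges'' is false. For a regular $2m$-gon the diameter graph is a perfect matching of $m$ pairwise crossing diameters, hence $m$ nontrivial components; this is the paper's own example in the proof of Theorem~\ref{thm:max}. Pairwise crossing segments need not share vertices.
\item Your first justification in Step 2 does not work. Vertex-disjoint cycles yield vertex-disjoint diameters, which must cross, so no contradiction arises from ``segments that do not cross''.
\item A $2$-colouring argument based on Dol'nikov's theorem cannot rule out even cycles, since even cycles are $2$-colourable.
\item The statement that $G$ has at most one cycle, and that it is odd, is true, but it does not follow from Dol'nikov's theorem as quoted (which only says that $G-x$ is bipartite for some $x$).
\end{itemize}
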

\begin{proof}
Let $G=G(X,\{a_s\})$. 
It is well-known that $G$ has at most $n$ edges, no two disjoint edges,
and at most one odd cycle~\cite{Bo90,HP1934,KM01,Pa13,Su1935}.
It is also well-known that $G$ can be computed from $\conv(X)$ in linear time,
thus in $O(n \log{n})$ time overall~\cite[Chap.~4.2]{PS85}. 

Now if $G$ is a forest, then a \textsc{MIS} in $G$ can be computed in $O(n)$ time
by dynamic programming~\cite[Chap.~3.10]{Er19};
so we may assume otherwise. Let $C=\{v_1, v_2, \ldots, v_k\}$ be an (odd) cycle in $G$;
note that $C$ can be computed via a graph traversal~\cite[Chap.~22]{CLRS09}. 
Then $G$ consists of a \emph{hairy cycle}, namely $C$, with zero or more edges hanging from the cycle,
each adjacent to a leaf of degree one. Removing $v_i$ and its adjacent vertices from $G$ yields
a forest $F_i$. A \textsc{MIS} in $G$ may contain zero or more vertices from $C$.
Its size is $n-k$ in the former case, and  $1 + \max_{i \in [k]} \alpha(F_i)$ in the latter case
(by dynamic programming in the forest $F_i$).
It follows that a \textsc{MIS} in $G$, whose size is
$\max \left( n-k,1 + \max_{i \in [k]} \alpha(F_i) \right)$ can be computed in $O(n^2)$ time.
\end{proof}

\section{Multiple discrete restrictions} \label{sec:discrete-multiple}

\begin{observation} \label{obs:monotone}
The function $g$ satisfies the following properties:
\begin{enumerate} [{\rm (i)}]
\item If $D \subset D'$ then $g(n,D) \geq g(n,D')$.
\item Let $D \subseteq A(X)$. If $x \in D$ and $g(n,x) \geq c n$ for some constant $c>0$,
  then $g(n,D) \geq g(c n, D \setminus \{x\})$. 
  \end{enumerate}
\end{observation}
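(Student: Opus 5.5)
Both parts follow directly from the definitions of $g(X,D)$ and $g(n,D)$ as minima of independence numbers of distance graphs. We interpret $g(n,D)$ as the minimum over all $n$-point sets $X$ with $D\subseteq A(X)$, or, more permissively, allowing distances in $D$ that do not occur, which only removes edges. Either way the argument is pointwise in $X$.

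For (i), fix any $n$-element set $X$. Since $D\subset D'$, every edge of $G(X,D)$ is an edge of $G(X,D')$; the vertex sets coincide. So every subset of $X$ avoiding all distances in $D'$ also avoids all distances in $D$, which gives $g(X,D)\ge g(X,D')$. Taking the minimum over all admissible $X$ yields $g(n,D)\ge g(n,D')$. The one point to check is the admissibility convention: if $g(n,D)$ is taken only over $X$ with $D\subseteq A(X)$, then the sets admissible for $D'$ are also admissible for $D$. Minimizing over a larger family can only decrease the value, so we should compare pointwise on sets admissible for $D'$ and then note $\min_{X\ \text{adm. for }D} g(X,D)\le\ldots$. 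This direction needs care, so I would instead adopt the convention that distances in $D$ need not be realized. Under that convention both minima range over all $n$-point sets, and the inequality is immediate.

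For (ii), fix any $n$-element $X$. By the hypothesis $g(n,\{x\})\ge cn$, there is $Y\subseteq X$ with $|Y|\ge cn$ (taking $\lceil cn\rceil$ points, or $\lfloor\cdot\rfloor$ as appropriate) in which no two points are at distance $x$. Apply the definition of $g$ to the point set $Y$ and the distance set $D\setminus\{x\}$. We obtain $Z\subseteq Y$ of size at least $g(|Y|,D\setminus\{x\})\ge g(cn,D\setminus\{x\})$ that avoids every distance in $D\setminus\{x\}$. Here we use that $g(m,\cdot)$ is nondecreasing in $m$, which holds because deleting points from an independent set keeps it independent, so we may pass to a subset of exactly $cn$ points. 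Being a subset of $Y$, $Z$ also avoids $x$, so $Z$ avoids all of $D$. Hence $g(X,D)\ge g(cn,D\setminus\{x\})$, and minimizing over $X$ gives the claim.

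The only real obstacle is bookkeeping rather than mathematics. First, we need monotonicity of $g(m,D)$ in $m$ together with rounding of $cn$. Second, we need the convention on whether the distances in $D$ must occur in $X$ or in $Y$. I would handle both by stating the convention that $g(m,D)$ ranges over all $m$-point sets, with absent distances imposing no constraint. I would also restrict to a subset of $Y$ of size $\lceil cn\rceil$ before applying the definition.
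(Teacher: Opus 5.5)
Your proposal is correct and follows the same argument as the paper. For (i), the extra distances in $D'\setminus D$ only add constraints. For (ii), you first retain at least $cn$ points avoiding $x$, then extract from them a subset avoiding $D\setminus\{x\}$, using that passing to a subset creates no new distances. Your extra bookkeeping fills in details the paper leaves implicit: rounding $cn$, shrinking $Y$ to exactly $\lceil cn\rceil$ points, and reading $g(n,D)$ as a minimum over all $n$-point sets with unrealized distances imposing no constraint, which is consistent with the paper's wording.
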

\begin{proof}
(i) Distances in $D' \setminus D$ may impose additional constraints. 

(ii) Note that removing points does not introduce new distances.
Thus in order to avoid distances in $D$, one can proceed in two steps: first retain $c n$ points
that avoid distance $x \in D$. Then retain (recursively) a subset of these points that avoids distances
in $D \setminus \{x\}$. 
\end{proof}

\begin{proposition} \label{prop:k}
 For every $n$ and $k$ (subject to the range restrictions), we have $g(n,k) \geq 0.2293^k n$.  
\end{proposition}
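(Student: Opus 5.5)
The plan is induction on $k$, peeling off one distance at a time with Observation~\ref{obs:monotone}(ii) and using Theorem~\ref{thm:unit} as the engine for each single distance. The one subtlety is that the random-shift argument in Theorem~\ref{thm:unit} works for \emph{any} positive real $x$, not only for $x \in A(X)$. If $x \notin A(X)$, the constraint is vacuous and all points can be kept. So the bound $0.2293\,m$ holds for every $m$-point set and every single distance. This matters because after discarding points, some distances of $D$ may no longer occur in the retained set.

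The base case $k=1$ is exactly the lower bound of Theorem~\ref{thm:unit}: $g(n,1) \geq 0.2293\, n$. For $k=0$ the statement is trivial, since $g(n,\emptyset)=n$.

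For the inductive step, let $X$ have $n$ points and let $D \subseteq A(X)$ with $|D|=k \geq 2$. Fix some $x \in D$.
\begin{enumerate}[(1)]
\item Apply Theorem~\ref{thm:unit} to $X$ and $x$ to obtain $X_1 \subseteq X$ with $|X_1| \geq 0.2293\, n$ and no pair of points at distance $x$.
\item Put $D_1 := (D \setminus \{x\}) \cap A(X_1)$. Since removing points creates no new distances, avoiding $D_1$ inside $X_1$ is the same as avoiding $D \setminus \{x\}$. Also $|D_1| \leq k-1$.
\item By the induction hypothesis, combined with monotonicity (Observation~\ref{obs:monotone}(i)) to handle $|D_1|<k-1$, $X_1$ contains a subset $X_2$ avoiding $D_1$ with
\[
|X_2| \geq 0.2293^{k-1}|X_1| \geq 0.2293^{k}\, n .
\]
\end{enumerate}
Then $X_2$ avoids every distance in $D$: it avoids $x$ because $X_2 \subseteq X_1$, and it avoids $D\setminus\{x\}$ by step (2). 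This is precisely Observation~\ref{obs:monotone}(ii) with $c=0.2293$, iterated $k$ times.

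The only real obstacle is bookkeeping. To keep the induction clean, I would state the hypothesis for arbitrary finite point sets $Y$ and arbitrary sets of at most $k-1$ positive reals, rather than subsets of $A(Y)$. This sidesteps the range restriction $D \subseteq A(X)$ entirely. Integrality is not an issue, because the bound is a real-valued lower bound on the size of a set that actually exists, and the randomized averaging in Theorem~\ref{thm:unit} gives $\geq 0.2293\,|Y|$ directly, without rounding.
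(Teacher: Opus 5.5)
Your proposal is correct and matches the paper's proof: both avoid the $k$ distances one at a time, using Observation~\ref{obs:monotone}(ii) and keeping a $0.2293$ fraction at each step. The paper cites the algorithmic Theorem~\ref{thm:unit-approx} where you cite the existential Theorem~\ref{thm:unit}, but both rest on the same random-shift tortoise argument, and your explicit handling of distances in $D$ that drop out of $A(X_1)$ fills in bookkeeping the paper leaves implicit.
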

\begin{proof}
Let $D=\{b_1,\ldots,b_k\} \subseteq A(X)$. Successively avoid the given distances. 
In the $i$th step, use Theorem~\ref{thm:unit-approx} to avoid distance $b_i$.
A fraction of at least $0.2293$ is retained in each step, hence the lower bound is implied by
Observation~\ref{obs:monotone}~(ii). 
\end{proof}

\begin{theorem} \label{thm:k}
For every $n$ and $k \leq s$:
\begin{enumerate} [{\rm (i)}] \itemsep 2pt
\item If $|D|=k$, then $g(n,D) \geq 0.2293^k n$.  Consequently, $g(n,k) \geq 0.2293^k n$.
  \item Given $X$ and $D \subseteq A(X)$, where $|X|=n$ and $|D|=k$, 
  an independent set of size $0.2293^k n$ can be found in $O(kn^2)$ time.
\item There exists $X$ and $D \subseteq  A(X)$ with $|D|=k$,
  such that  $g(X, D) \leq c_k n$,  and $c_k = O \left( (k \sqrt{\log{k}})^{-1} \right)$.
\end{enumerate}
\end{theorem}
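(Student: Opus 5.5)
Parts (i) and (ii) follow from results already in place. For (i), I will iterate Observation~\ref{obs:monotone}~(ii) along $D=\{b_1,\ldots,b_k\}$, as in Proposition~\ref{prop:k}. At each stage Theorem~\ref{thm:unit} (after rescaling $b_i$ to $1$) retains at least a $0.2293$ fraction of the current set while avoiding $b_i$. Removing points creates no new distances, and any $b_j$ that no longer occurs is trivially avoided. After $k$ rounds at least $0.2293^k n$ points survive, and they avoid all of $D$. Taking the minimum over $X$ and $D$ gives $g(n,k)\ge 0.2293^k n$.

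For (ii), I will run Algorithm~\texttt{A} from Theorem~\ref{thm:unit-approx} $k$ times in succession. The $i$th run rescales so that $b_i=1$ and is applied to the output of the previous run. Each run costs $O(m^2)\le O(n^2)$ on $m\le n$ points, so the total time is $O(kn^2)$, and the size bound is the same product as in~(i).

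For (iii), I need a point set whose $k$-distance graph has small independence ratio. My plan is a product (Minkowski sum) construction in the spirit of the lower bounds for the chromatic number of the plane with $k$ forbidden distances. The template is a set on which $k$ distances realise a graph that is close to complete in the relevant sense.

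The cleanest candidate I see is to take disjoint, widely separated copies of a single $N$-point configuration $Y$ together with $k$ distances in $A(Y)$ for which $\alpha(G(Y,D))$ is small. Then $g(X,D)\le (\alpha/N)\,n$ by the same disjoint-copies argument used in Theorem~\ref{thm:unit}. Widely separated copies introduce only new, larger distances, and these are not in $D$.

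To get $\alpha/N=O(1/(k\sqrt{\log k}))$, I would take $Y$ to be a piece of a lattice, say a $\sqrt N\times\sqrt N$ section of $\mathbb Z^2$, and let $D$ be the $k$ smallest distances realised in $\mathbb Z^2$. These are $\sqrt m$ for the first $k$ integers $m$ that are sums of two squares. By Landau's theorem they satisfy $m\le M=\Theta(k\sqrt{\log k})$. Every pair of points of $Y$ at squared distance at most $M$ that is a sum of two squares is then an edge.

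An independent set therefore has all pairwise squared distances either exceeding $M$ or not representable as a sum of two squares. Since every squared lattice distance \emph{is} a sum of two squares, the second option never occurs. So an independent set is a packing with minimum distance greater than $\sqrt M$, and its density is $O(1/M)$.

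This yields $\alpha(G(Y,D))\le O(N/M)+O(\text{boundary})$, so $c_k=O(1/M)=O\bigl((k\sqrt{\log k})^{-1}\bigr)$ once $N$ is large compared with $M$.

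The main obstacle is this last step: pinning the Landau estimate for the count of integers up to $M$ that are sums of two squares, about $CM/\sqrt{\log M}$, to the correct direction. I need to check that $k$ distances reach $M\asymp k\sqrt{\log k}$ rather than $k/\sqrt{\log k}$. I also need to handle the packing bound cleanly, by charging disjoint disks of radius $\sqrt M/2$ around the independent points inside a slightly enlarged square.
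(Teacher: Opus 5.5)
Your parts (i) and (ii) coincide with the paper's proof. In both, the single-distance bound is iterated once for each distance in $D$: a random shift of Croft's tortoise packing for (i), and the $O(n^2)$ optimal-placement step of Algorithm~\texttt{A} for (ii).

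For (iii) your argument is correct but takes a genuinely different route. The paper takes disjoint copies of a set $H$ that determines at most $k$ distances with $|H| = h(k) = \Omega(k\sqrt{\log k})$, namely Erd\H{o}s's grid section. It sets $D = A(H)$, so each copy is a clique in $G(X,D)$ and contributes at most one point. You instead keep a large grid and forbid only its $k$ smallest distances. Since every squared lattice distance is a sum of two squares, independent sets are $\sqrt{M}$-separated, and an area count finishes the proof.

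The point you flagged resolves in the right direction, and only the upper half of Landau's estimate is needed. Since $M = m_k \geq k$ and $k \leq C M/\sqrt{\log M}$, you get $M \geq k\sqrt{\log k}/C$.

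Both proofs rest on the same count of sums of two squares. The paper's is shorter once Erd\H{o}s's bound on $h(k)$ is taken as a black box. Yours needs the extra packing step, but it gives $D \subseteq A(X)$ with $|D| = k$ exactly for free. The paper's $D = A(H)$ may have fewer than $k$ elements, so strictly it must be padded, which monotonicity in $D$ permits.

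The two constructions are in fact close. In your graph, the lattice points in any disk of diameter $\sqrt{M}$ already form a clique of size about $\pi M/4$. So disjoint copies of such a disk section would give the same bound via the paper's one-point-per-copy argument.
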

\begin{proof}
 \emph{Lower bound.}  (i)~Let $D=\{a_1,\ldots,a_k\}$.  Successively avoid the given distances. 
 In the $i$th step, use a randomly shifted periodic tiling of the plane that achieves $m_1(\RR^2) \geq 0.2293$.
 Recall that this construction packs \emph{tortoises} of diameter  $1$ into the vertices of a triangular lattice
 with edge length slightly less than $2$. 
Suppose that there are $m$ points at the beginning of the current step.
An averaging argument shows that the expected number of points contained in the tortoises is at least $0.2293 m$. 
By design, these points form an independent set.
This completes the description of one step.
The procedure is repeated $k$ times. A fraction of at least $0.2293$ is retained in each step, hence
the lower bound is implied.

\medskip
\emph{Algorithm.} (ii)~Successively avoid the given distances.  In the $i$th step, use the quadratic-time
algorithm of Agarwal~\etal~\cite[Section~2]{AHR+02} from the proof of Theorem~\ref{thm:unit-approx}.
The procedure is repeated $k$ times hence the overall running time is $O(kn^2)$.

\medskip
\emph{Upper bound.} (iii)~Let $h(k)$ be the maximum size of a planar point set that determines at most $k$ distances.
According to a result of  Erd\H{o}s~\cite{Er46}, a section of the integer lattice gives
$h(k) = \Omega(k \sqrt{\log{k}})$; see also a recent paper by Wang~\cite{Wa25}.
Let $H$ be such a set and set $D=A(H)$.  For simplicity assume that $n$ is divisible  by $|H|$.
Let $X$ consist of $n/|H|$ disjoint copies of $H$. At most one point can be retained from each copy in
a subset that avoids $D$.  This proves the upper bound.
\end{proof}

\subsection{Distance combinations} \label{subsec:combinations}

Consider the min-max distance pair $a_1,a_s$. A two-step elimination procedure using 
Theorem~\ref{thm:max} for the maximum distance, and Observation~\ref{obs:monotone} and~\eqref{eq:min}
for the minimum distance yields $g(n,\{a_1,a_s\}) \geq 4n/31 -O(1)$. Next we obtain a better bound
(observe that the bounds are within a factor of $2$ from each other):

\begin{theorem} \label{thm:min+max}
 For the min-max distance pair $a_1,a_s$ we have
  \begin{equation} \label{eq:min+max}
    \frac{8n-20}{47} \leq g(n,\{a_1,a_s\})  \leq \frac{5n}{16}.
\end{equation}
\end{theorem}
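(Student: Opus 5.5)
The upper bound needs no new construction, and the lower bound I would get by splitting $X$ with Dol'nikov's theorem so that the diameter graph only connects the two parts.

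\emph{Upper bound.} For every $X$ we have $g(X,\{a_1,a_s\}) \leq g(X,\{a_1\})$, since avoiding more distances can only shrink the admissible subsets (the pointwise version of Observation~\ref{obs:monotone}~(i)). So the Pach--T\'oth sets from~\eqref{eq:min}, for which $g(X,\{a_1\}) \leq 5n/16$, already give $g(n,\{a_1,a_s\}) \leq 5n/16$.

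\emph{Lower bound.} I would apply the result of Dol'nikov used in the proof of Theorem~\ref{thm:max}: there is a point $x \in X$ such that $X \setminus \{x\} = X_1 \cup X_2$ with $\diam(X_i) < \diam(X) = a_s$. The steps are as follows.

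\begin{enumerate}[(1)]
\item No pair inside $X_i$ is at distance $a_s$. For the distance $a_1$: either $a_1$ does not occur in $X_i$, or it is exactly the minimum distance of $X_i$, because $a_1$ is a lower bound for all distances in $X_i$.
\item By Swanepoel's bound in~\eqref{eq:min}, each $X_i$ contains a subset $I_i$ of size at least $8|X_i|/31$ avoiding $a_1$. This is trivially true if $a_1$ does not occur in $X_i$.
\item The union $I_1 \cup I_2$ avoids $a_1$ as well: any pair at distance $a_1$ must have its endpoints in different parts, but so far only pairs inside the same $X_i$ were checked. I need to handle such cross pairs too.
\item The same cross pairs are the only possible pairs at distance $a_s$ in $I_1 \cup I_2$. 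These diameter pairs form a subgraph of the diameter graph, which has no two disjoint edges. Being bipartite between $I_1$ and $I_2$, that subgraph has no triangle, so it is a star (or empty), and deleting its center destroys all of them.
\end{enumerate}

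The real obstacle is the cross pairs at distance $a_1$ between $I_1$ and $I_2$, which step~(3) flags. I would avoid this issue altogether by not splitting the $a_1$-problem. First pass to the subset $Y \subseteq X \setminus \{x\}$ of size $\geq 8(n-1)/31$ avoiding $a_1$ (Swanepoel applied to $X \setminus \{x\}$). Then look only at the diameter pairs inside $Y$ that cross between $X_1 \cap Y$ and $X_2 \cap Y$. By step~(4) they form a star, so deleting one vertex gives
\[
|Y|-1 \geq \frac{8(n-1)}{31}-1 .
\]
This bound exceeds $(8n-20)/47$ as soon as $128 n \geq 1213$, i.e.\ for $n \geq 10$.

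\emph{Small cases.} For $n \leq 8$ the target $(8n-20)/47$ is at most $1$, so a single point suffices. For $n=9$ two points are needed. Here I would take the larger part $X_2$, which has at least $4$ points and no diameter pair, and find two of its points not at distance $a_1$. This works because the minimum-distance graph on at least $4$ points is not complete: it is a unit distance graph, and $K_4$ is not one. If some rounding in Swanepoel's bound turns out to be needed, it only affects these finitely many cases, and I would check them the same way.
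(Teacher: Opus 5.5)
Your upper bound is fine and is the same as the paper's. The lower bound, however, fails at step~(4), and your final argument rests entirely on that step. The fact that the planar diameter graph has ``no two disjoint edges'' (Hopf--Pannwitz) is geometric: any two diameter \emph{segments} intersect, but they may cross at an interior point without sharing an endpoint. It does not say that two diameter pairs share a vertex, so the diameter pairs inside $Y$ need not form a star. The paper's own extremal example for Theorem~\ref{thm:max} already shows this. For the vertices $v_0,\ldots,v_{2m-1}$ of a regular $2m$-gon, the diameter graph is the perfect matching $\{v_iv_{i+m}\}$. Take $m\ge 4$ even, $x=v_0$, $X_1=\{v_1,\ldots,v_m\}$ and $X_2=\{v_{m+1},\ldots,v_{2m-1}\}$; this is a valid Dol'nikov split. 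Let $Y=\{v_1,v_3,\ldots,v_{2m-1}\}$. It avoids $a_1$ and has $m\ge 8(n-1)/31$ points, so it is a legitimate output of your Swanepoel step. Yet $Y$ contains the $m/2$ pairwise vertex-disjoint diameter pairs $v_iv_{i+m}$ with $i<m$ odd, and deleting one vertex leaves the others intact.

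Once this is corrected, your route gives only the following. All diameter pairs in $Y$ run between $Y\cap X_1$ and $Y\cap X_2$, so you can keep the larger side, which yields $|Y|/2\ge 4(n-1)/31\approx 0.129n$. That is exactly the ``two-step elimination'' bound $4n/31-O(1)$ that the paper mentions just before the theorem and then improves, and it falls short of $8n/47\approx 0.170n$. Your claimed $8n/31-O(1)$ would be much stronger than the theorem itself, which should have been a warning sign.

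The missing idea is to let the cost of destroying the diameters depend on the number $h$ of vertices of $\conv(X)$. All diameter endpoints are hull vertices, and every diameter segment meets a fixed diameter $ab$. Hence deleting the shorter of the two hull chains between $a$ and $b$, at most $h/2+1$ points, removes every diameter. Applying Swanepoel to the remaining set then gives $\frac{8}{31}(n-h/2-1)$, which is good when $h$ is small. When $h$ is large, the paper instead takes every other vertex of the remaining (lower) hull chain, about $h/4$ points. The convex combination of the two bounds with weights $31/47$ and $16/47$ eliminates $h$ and produces $(8n-20)/47$.
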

\begin{proof}
  \emph{Lower bound.}
Let $ab$ be an arbitrary diameter pair and let $h$ be the number of vertices of $\conv(X)$.
We may assume that $ab$ is horizontal and that there are $h_1$ vertices on the upper hull and
$h_2$ vertices on the lower hull, where  $h_1 \leq h_2$ and $h_1+ h_2 =h+2$. 
Remove $h_1 \leq h/2 +1$ vertices from the upper hull (including $a$ and $b$), and thereby obtain
an independent set $X_1 \subset X$ in $G(X,\{a_s\})$; where $|X_1| \geq n- h/2 -1$. 

It remains to choose an independent set in $G(X_1,\{a_1\})$. We choose the better of the following two:
\begin{enumerate} \itemsep 3pt
\item Select every other vertex on the lower hull of $X_1$.  
\item Find an independent set of size at least $8|X_1|/31$ in $G(X_1,\{a_1\})$ using the algorithm of
  Swanepoel~\cite{Sw02} mentioned in Section~\ref{sec:intro}. 
\end{enumerate}

Then the size of the resulting subset is at least
\begin{align*}
\max \left\{ \frac{h-3}{4}, \frac{8}{31} \left( n -\frac{h}{2} - 2\right) \right\} \geq 
\frac{16}{47} \cdot \frac{h-3}{4}  + \frac{31}{47} \cdot \frac{8}{31} \cdot \left( n -\frac{h}{2} - 2\right) 
= \frac{8n-20}{47}. 
\end{align*}

\medskip
\emph{Upper bound.}  By Observation~\ref{obs:monotone} and~\eqref{eq:min},
 we have $g(n,\{a_1,a_s\})  \leq g(n,\{a_1\}) \leq \frac{5n}{16}$.
\end{proof}

Next we show that if we forbid the diameter and another distance,  we may be forced
to retain a smaller fraction of the points.

\begin{theorem} \label{thm:any+max}
  For any $x \in A(X)$ let $D=\{x,a_s\}$. Then $g(n,D) \geq 0.1146 n$.
On the other hand, there exists $X$, $|X|=n$, and $x \in A(X)$, such that  $g(X,\{x,a_s\}) \leq n/4$. 
\end{theorem}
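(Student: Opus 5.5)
The lower bound follows by a two-step elimination based on Observation~\ref{obs:monotone}~(ii). First we avoid the diameter. By Theorem~\ref{thm:max}, every $n$-element set $X$ contains a subset $X_1$ with $|X_1| \geq \lfloor n/2 \rfloor$ and no pair at distance $a_s$. Concretely, Dol'nikov's result yields a point $x_0$ and a partition of $X\setminus\{x_0\}$ into two parts of smaller diameter, and we keep the larger part. Removing points creates no new distances, so the only remaining constraint inside $X_1$ is the distance $x$. If $x$ does not occur in $A(X_1)$ we are done. Otherwise we apply Theorem~\ref{thm:unit} (or, algorithmically, Theorem~\ref{thm:unit-approx}) to $X_1$ with the distance $x$. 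The random-shift tortoise argument keeps at least $0.2293\,|X_1|$ points of $X_1$, and no two of them are at distance $x$. This gives
\[
g(n,\{x,a_s\}) \;\geq\; 0.2293\,\Bigl\lfloor \frac{n}{2}\Bigr\rfloor \;\geq\; 0.1146\, n
\]
up to a rounding term. The order of the two steps is immaterial in principle. Doing the diameter first is cleaner because the factor $1/2$ is exact.

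The only delicate point in the lower bound is the floor when $n$ is odd. There $\lfloor n/2\rfloor = (n-1)/2$, and $0.2293(n-1)/2 = 0.11465n - 0.11465$, which falls below $0.1146n$ for moderate $n$. I would check this carefully rather than hide it. One option is to use the fact that the tortoise density $\rho_{\rm Croft}$ exceeds $0.2293$ by a small but definite margin. Another is to pass from expectation to an integer count: some shift retains at least $\lceil 0.2293\,|X_1|\rceil$ points. If neither closes the gap for every $n$, I would state the bound as $0.1146n - O(1)$, in the same spirit as the $-20/47$ term in Theorem~\ref{thm:min+max}.

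For the upper bound I would use disjoint, widely separated copies of a small gadget $H$, with $x$ chosen as a distance of $H$. The separation between copies should exceed every distance occurring inside the gadgets, so that the largest inter-copy distance is the diameter $a_s$ of $X$. The first choice to try is a unit square, with $x = 1$ its side length. Inside one square the side pairs are at distance $x$, and the diagonal pairs are at distance $\sqrt2 \ne a_s$, so the square restricted to $G(X,\{x\})$ is a $4$-cycle with independence number $2$. That only gives the bound $n/2$, so the square is not enough. I need a gadget whose two forbidden distances together leave at most one point in four.

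A cleaner construction makes the gadget's own diameter coincide with $a_s$. I would take $X$ to be a set whose diameter graph is dense together with a second distance. For example, take the vertices of regular polygons, or $n/4$ copies of a unit-side rhombus placed so that each copy realizes $a_s$ internally. The simplest candidate I would try is $n/4$ points at each vertex neighbourhood of a square of side $1$ and diagonal $a_s=\sqrt2$. Replace each vertex $v$ by a cluster $C_v$ of $n/4$ points, where the points in a cluster are at pairwise distance $x$ from each other, for instance along short unit-distance structures. Then any independent set takes points from one diagonal pair only and at most one per cluster. Checking that the clusters can be made to realize only $x$ internally while keeping exact distances $1$ and $\sqrt2$ across clusters is the main obstacle.

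Failing that, I would instead reuse the rotated-polygon idea from Theorem~\ref{thm:max}: an even regular polygon gives independence ratio $1/2$ for $a_s$. Superimposing a distance $x$ that forms a perfect matching between the two retained halves would then cut the ratio to $1/4$.
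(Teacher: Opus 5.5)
Your lower bound is the paper's two-step elimination with the steps swapped. The paper first keeps at least $0.2293n$ points by the tortoise shift and then halves via Theorem~\ref{thm:max}; either order is valid. Your caution about rounding is justified, because the paper's estimate $0.2293n/2$ drops the same floor. For $n=27$ both orders certify only $3<0.1146\cdot 27$ points, so a $-O(1)$ caveat is appropriate.

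The genuine gap is the upper bound: you never produce a verified construction, and two of your candidates fail.
\begin{itemize}
\item The cluster construction is impossible, because at most three points of the plane are pairwise equidistant. If it did work, it would force independent sets of size $O(1)$, contradicting the linear lower bound you just proved.
\item The polygon fallback fails as stated. If the $x$-edges form a perfect matching, then together with the antipodal matching of diameters they form a disjoint union of even cycles and single edges, whose independence ratio is still $1/2$. To reach $1/4$, each antipodal pair must be \emph{completely} joined by $x$-edges to another antipodal pair, so every vertex needs two $x$-neighbours.
\item The rhombus idea points the right way but is never specified.
\end{itemize}

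The missing idea is a $4$-point two-distance gadget that induces $K_4$ once its larger distance is the diameter of $X$. The $n/4$ copies must be placed \emph{concentrically and rotated} rather than far apart, so that each copy's diameter pair is a diameter pair of $X$. Wide separation is exactly what destroyed this for your square.

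The paper uses the $60^\circ$ rhombus of unit side. Its sides and short diagonal have length $1$, and its long diagonal has length $\sqrt3$. Take $n/4$ copies centred at a common point and rotated through distinct angles in $[0,\pi)$. Then $a_s=\sqrt3$, and each copy is a $K_4$ in $G(X,\{1,\sqrt3\})$.

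Your discarded square works the same way. Let $X$ be the vertex set of a regular $n$-gon with $4\mid n$ and circumradius $\sqrt2/2$, and let $x=1$ be the chord subtending $90^\circ$. Then $G(X,\{1,\sqrt2\})$ is exactly $n/4$ disjoint copies of $K_4$.

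Alternatively, your separated-copies plan already succeeds if the gadget is the D\'ucz--Varga graph $G_0$. Adding $a_s$ to the forbidden set can only shrink independent sets, so Theorem~\ref{thm:unit} gives $g(X,\{x,a_s\})\le g(X,\{x\})\le cn<n/4$.
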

\begin{proof}
  \emph{Lower bound.} We may assume that $1=x <a_s$. Avoid distances in $D$ in two steps.
  Use the algorithm from the proof of Theorem~\ref{thm:unit-approx} to avoid $x$ and then keep half
  of the remaining points to avoid the diameter. The procedure retains at least
  $0.2293 n/2 \geq 0.1146 n$ points.

\medskip
\emph{Upper bound.}
Let $n$ be a multiple of $8$.
Consider the $60^\circ$ rhombus of unit side-length and let $X$ consist of $n/4$ disjoint copies whose
$60^\circ$ vertices form a regular $n/4$-gon . Note that $X$ has $n/4$ diameter pairs, each of length
$a_s=\sqrt{3}$. Let $x=1$ and $D=\{x,a_s\}$. At most one point can be retained from each copy in a subset that
avoids $D$.  This proves the upper bound.
\end{proof}

\begin{theorem} \label{thm:two}
  For any two distances $x,y \in A(X)$ we have $g(n,\{x,y\}) \geq 0.052 n$.
  On the other hand, there exists $X$ and $x,y \in A(X)$ such that
  $g(X,\{x,y\}) \leq \lceil n/5 \rceil$. 
\end{theorem}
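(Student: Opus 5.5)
The lower bound will come from Proposition~\ref{prop:k} with $k=2$, that is, from iterating Theorem~\ref{thm:unit} twice through Observation~\ref{obs:monotone}~(ii). Given $X$ with $|X|=n$ and distances $x,y\in A(X)$, I would first rescale so that $x=1$. By Theorem~\ref{thm:unit} (or Theorem~\ref{thm:unit-approx}, which also gives an algorithm), there is a subset $X_1\subseteq X$ with $|X_1|\ge 0.2293\,n$ that avoids distance $x$.

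Removing points creates no new distances, so there are two cases for $X_1$.

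\emph{Case 1: $y\notin A(X_1)$.} Then $X_1$ already avoids both distances, and we keep $X_1$.

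\emph{Case 2: $y\in A(X_1)$.} We rescale so that $y=1$ and apply Theorem~\ref{thm:unit} again, now to $X_1$. This gives $X_2\subseteq X_1$ with $|X_2|\ge 0.2293\,|X_1|$, and $X_2$ avoids both $x$ and $y$.

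In either case the retained set has size at least
\[
0.2293^2\, n = 0.05257\ldots n \ \ge\ 0.052\, n .
\]
The only detail to watch is that the random-shift argument gives an expected count, so in each round I would pick a shift attaining at least the expectation. The stated constant $0.052$ is deliberately a little below $0.2293^2$, which leaves room for that.

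For the upper bound I would use disjoint, widely separated copies of the regular pentagon. Take $x$ to be the side length and $y=x(\sqrt5+1)/2$ the diagonal length. A regular pentagon spans only these two distances, so any subset of one copy avoiding $\{x,y\}$ has at most one point. This matches the example $g(X,\{1,(\sqrt5+1)/2\})=1$ given in the introduction.

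Take $\lfloor n/5\rfloor$ congruent pentagons. If $5\nmid n$, put the remaining $n-5\lfloor n/5\rfloor<5$ points on one more partial pentagon, i.e.\ a subset of the vertices of one more copy; that block still contributes at most one point.

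The copies must be placed far apart, at mutually generic positions, so that no cross-copy distance equals $x$ or $y$. Finitely many placements are excluded, so this is easy to arrange.

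With this construction every block contributes at most one point to an independent set, and the number of blocks is $\lceil n/5\rceil$. Hence
\[
g(X,\{x,y\})\le \lceil n/5\rceil .
\]

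I expect no real obstacle. The only step that needs care is the genericity placement, which must ensure that both forbidden distances occur only inside copies.
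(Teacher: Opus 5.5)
Your proposal is correct and follows the paper's proof. The lower bound comes from applying the tortoise random-shift argument (Theorem~\ref{thm:unit-approx}) first to $x$ and then to $y$, retaining at least $0.2293^2 n$ points, and the upper bound comes from widely spaced regular pentagons, each contributing at most one point. Your partial-pentagon handling of $n$ not divisible by $5$ is slightly more careful than the paper, which simply assumes $5 \mid n$, and choosing a shift that attains the expectation costs nothing, so the gap between $0.052$ and $0.2293^2$ is just rounding rather than needed slack.
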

\begin{proof}
  \emph{Lower bound.}  Use the algorithm from the proof of Theorem~\ref{thm:unit-approx} to avoid
  $x$ and $y$ successively. The procedure retains at least $0.2293^2 n \geq 0.052 n$ points.

  \medskip
\emph{Upper bound.}  For simplicity assume that $n$ is divisible  by $5$.
Consider $n/5$ disjoint copies of a regular pentagon. Note that it is a two distance set;
let $x,y$ be the two distances. At most one point can be retained from each pentagon in a subset that
avoids $x,y$.  This proves the upper bound.
\end{proof}

In specific instances, we manage to obtain tighter bounds. Theorems~\ref{thm:smallest2} and~\ref{thm:largest2}
are two other cases where the upper and lower bounds are within a factor of $2$ from each other.

\begin{theorem} \label{thm:smallest2}
  For the two smallest distances $a_1,a_2 \in A(X)$ we have $g(n,\{a_1,a_2\}) \geq n/9$.
  On the other hand, there exists $X$, $|X|=n$, such that $g(X,\{a_1,a_2\}) \leq \lceil n/5 \rceil$. 
\end{theorem}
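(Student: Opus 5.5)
We prove the lower bound $g(n,\{a_1,a_2\}) \geq n/9$ by a degree-counting argument followed by the Caro--Wei bound~\eqref{eq:caro-wei-avg}. The upper bound comes from disjoint copies of a regular pentagon.

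\emph{Lower bound.} Scale so that $a_1=1$, and let $G=G(X,\{a_1,a_2\})$. We will show that $G$ has average degree less than $8$, so that~\eqref{eq:caro-wei-avg} gives $\alpha(G) \geq n/9$. Since the independent set is produced greedily, it can also be found efficiently. Two ingredients are needed.

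The first is a packing bound on degrees. The points of $X$ are pairwise at distance at least $1$. Therefore the neighbours of a point $p$ in $G$ lie in the annulus $1 \leq |pq| \leq a_2$, and they are pairwise at distance at least $1$. The neighbours at distance exactly $1$ subtend pairwise angles of at least $60^\circ$ at $p$, so there are at most $6$ of them.

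The second is planarity. Any two neighbours $q,q'$ of $p$ with $|qq'|\in\{1,a_2\}$ are adjacent in $G$ as well. Any two neighbours with $|qq'|>a_2$ subtend an angle greater than $60^\circ$ at $p$. The graph of the two smallest distances is known to have few edges; I would aim to show that it is planar, or at least that it has at most $4n$ edges. The planarity argument would go as follows. If two edges $pq$ and $rs$, each of length at most $a_2$, cross, then some endpoint is at distance less than the longer of the two edge lengths from an endpoint of the other edge. Since the only distances below $a_2$ are $a_1$, and the only distances at most $a_2$ are $a_1$ and $a_2$, this forces the crossing configuration to be a rhombus or kite with all sides in $\{a_1,a_2\}$. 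In such a configuration the crossing can be removed by redrawing the edge, or it can be charged against the short sides.

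If planarity holds, then $|E(G)|\le 3n-6$ and the average degree is at most $6$, which would even give $n/7$. The target $n/9$ suggests the intended argument is weaker. Most likely it shows that $G$ has at most $4n$ edges, for instance by bounding the crossings per edge and applying a $1$-planar-type edge bound, which also gives exactly $n/9$. Alternatively, it shows that the graph is $8$-degenerate by taking an extreme point of the convex hull, which has few neighbours in the annulus.

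I plan to use degeneracy. Take a vertex $p$ of $\conv(X)$. All its neighbours lie in a half-annulus of angle at most $180^\circ$. By the $60^\circ$-separation argument for points pairwise at distance at least $1$ in an annulus with outer radius $a_2$, such a vertex has at most $8$ neighbours. Removing $p$ and recursing works because both $a_1$ and $a_2$ remain distances that are at least as small as everything else in the remaining set. Hence $G$ is $8$-degenerate, so its chromatic number is at most $9$ and $\alpha(G)\ge n/9$.

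\textbf{Main obstacle.} The delicate step is bounding the number of neighbours of a hull vertex inside a half-annulus by $8$ when $a_2$ may be large compared with $a_1$. Points in such an annulus need not subtend angles of at least $60^\circ$ at $p$. The argument must instead use the fact that any two neighbours are either adjacent or farther apart than $a_2$. If they are adjacent, they are at distance $a_1$ or $a_2$; if not, they are at distance greater than $a_2$. Either way the angle they subtend at $p$ is bounded below. Handling the case where neighbours at distance $a_1$ subtend an angle below $60^\circ$ is the part that needs care.

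\emph{Upper bound.} Take $\lceil n/5\rceil$ far-apart unit regular pentagons. Within each pentagon the distances are the side and the diagonal, and these are the two smallest distances $a_1,a_2$ of the whole set. No two vertices of the same pentagon can both be retained, so at most one point per pentagon survives, giving at most $\lceil n/5\rceil$ points.
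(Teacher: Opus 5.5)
There is a genuine gap. It sits exactly in the one step that carries the lower bound: the claim that a vertex $p$ of $\conv(Y)$ has at most $8$ neighbours in $G[Y]$.

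The justification you give, a $60^\circ$-separation argument in the annulus $1\le |pq|\le a_2$, does not work. Two neighbours on the circle $|pq|=a_2$ at mutual distance $a_1=1$ subtend only $\theta_0=2\arcsin(1/(2a_2))$, which tends to $0$ as $a_2$ grows. Moreover, a half-annulus of outer radius $a_2$ contains $\Theta(a_2^2)$ points with pairwise distances $\ge 1$. So no uniform angular lower bound is available, and your ``main obstacle'' paragraph leaves the count unproved.

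The fallback routes you mention are also false as stated. $G$ need not be planar: the diagonals of a unit square cross, and the king's graph on a square grid has about $4n>3n-6$ edges. Nor need the average degree be below $8$: a large patch of the triangular lattice has $a_2=\sqrt3$ and interior degree $12$ in $G$. This is why the paper splits into two cases. For $a_2\le\sqrt3$ it uses Ivanov's $9$-colouring of the plane avoiding all distances in $[1,\sqrt3]$. For $a_2>\sqrt3$ it uses Csizmadia's bound of $4n$ pairs at the two smallest distances, together with Caro--Wei.

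Your degeneracy route can nevertheless be completed, and it would give a self-contained elementary proof. The missing idea is that the neighbours lie \emph{on} the two circles $|pq|=1$ and $|pq|=a_2$, inside an open half-plane (since $p$ is an extreme point). Their mutual distances are either exactly $1$ or at least $a_2$. Then:
\begin{itemize}
\item There are at most $3$ inner neighbours, since their angular gaps are at least $60^\circ$.
\item Any two outer neighbours subtend exactly $\theta_0$ or at least $60^\circ$. Hence if $\theta_0<30^\circ$, no two consecutive angular gaps equal $\theta_0$. Either way there are at most $6$ outer neighbours.
\item If there are exactly $6$ outer neighbours, the configuration is rigid. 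Either all five gaps equal $\theta_0\in[30^\circ,36^\circ)$, or $\theta_0<20^\circ$ and the gaps are $\theta_0$, $\ge 60^\circ$, $\theta_0$, $\ge 60^\circ$, $\theta_0$.
\item An inner neighbour must see each outer neighbour at angle exactly $\arccos(a_2/2)$ (possible only if $a_2\le 2$) or at least $\arccos(1/(2a_2))$. The latter exceeds $72^\circ$ in both rigid cases, and a direct check shows no inner neighbour fits.
\end{itemize}
Hence the degree is at most $\max(6,3+5)=8$. With the heredity you noted, $G$ is $8$-degenerate and therefore $9$-colourable.

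One minor point on the upper bound: $\lceil n/5\rceil$ full pentagons have more than $n$ points unless $5\mid n$. Use $\lfloor n/5\rfloor$ pentagons plus part of one more instead.
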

\begin{proof}
\emph{Lower bound.} We may assume that $a_1=1$. 
We distinguish two cases:   (i)~$a_2 \leq \sqrt3$. Since $\chi(\RR^2; [1,\sqrt3]) \leq 9$
by a result of Ivanov~\cite{Iv07}, see also~\cite{Ra12},  we can avoid both $a_1$ and $a_2$
by coloring the plane with $9$ colors.
  The procedure retains the points in $X$ in a largest color class, namely at least $n/9$ points. 
(ii)~$a_2 > \sqrt3$. By a result of Csizmadia~\cite[Thm.~4]{Cs99}, in this case, 
  the combined multiplicity of the two smallest distances among the points is at most $4n$.
  See also~\cite[Ch.~5]{BMP05}.
  Let $\overline{d}$ denote the average degree of the corresponding graph.
  Then $\overline{d} \leq 8$ and by applying~\eqref{eq:caro-wei-avg} in $G(X,\{a_1,a_2\})$,
  we again obtain $g(n,\{a_1,a_2\}) \geq n/9$.
 
\medskip
\emph{Upper bound.}  For simplicity assume that $n$ is divisible  by $5$.
Consider $n/5$ widely spaced copies of a regular pentagon. Note that $a_1,a_2 \in A(X)$ are the two 
distances in the pentagon. At most one point can be retained from each pentagon in a subset that
avoids $a_1,a_2$.  This proves the upper bound.
\end{proof}

\begin{theorem} \label{thm:largest2}
  For the two largest distances $a_{s-1},a_s \in A(X)$ we have
  $g(n,\{a_{s-1},a_s\}) \geq \left \lfloor n/4 \right \rfloor$.
  On the other hand, there exists $X$, $|X|=n$, such that
  $g(X,\{a_{s-1},a_s\}) \leq \lfloor n/2 \rfloor$. 
\end{theorem}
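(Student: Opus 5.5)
For the upper bound I would take $X$ to be the vertex set of a regular $n$-gon, exactly as in the upper bound of Theorem~\ref{thm:max}. By Observation~\ref{obs:monotone}~(i), forbidding the additional distance $a_{s-1}$ can only shrink the largest admissible subset. Applied to this fixed $X$, that gives
\[
g(X,\{a_{s-1},a_s\}) \le g(X,\{a_s\}).
\]
The proof of Theorem~\ref{thm:max} shows that the diameter graph of the regular $n$-gon is a perfect matching when $n$ is even and a Hamiltonian cycle when $n$ is odd. Hence $g(X,\{a_s\}) \le \lfloor n/2 \rfloor$, which gives the claimed upper bound.

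For the lower bound I would apply the Dol'nikov-type halving from the proof of Theorem~\ref{thm:max} twice. That argument is really a statement about diameters. Any finite $Y$ with $|Y|\ge 2$ contains a subset $Y_2$ with $|Y_2| \ge \lfloor |Y|/2 \rfloor$ and $\diam(Y_2) < \diam(Y)$. Indeed, there is a point $y$ such that $Y\setminus\{y\}$ splits into two parts of smaller diameter, and we take the larger part; the parity computation in that proof gives the floor bound.

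\emph{First application.} Take $Y=X$ to get $X_2$ with $|X_2| \ge \lfloor n/2 \rfloor$ and $\diam(X_2) < a_s$. Since all distances in $X_2$ belong to $A(X)$, this means $\diam(X_2) \le a_{s-1}$.

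\emph{Second application.} Take $Y=X_2$ to get $X_3 \subseteq X_2$ with $|X_3| \ge \lfloor |X_2|/2 \rfloor \ge \lfloor \lfloor n/2\rfloor /2 \rfloor = \lfloor n/4 \rfloor$ and $\diam(X_3) < \diam(X_2) \le a_{s-1}$. Then no pair in $X_3$ is at distance $a_{s-1}$ or $a_s$, as required.

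Two degenerate situations need a line each. If $\diam(X_2) < a_{s-1}$ already, we keep $X_2$ itself, which is even larger. If $|X_2| \le 1$, then $n \le 3$ and $\lfloor n/4 \rfloor = 0$, so there is nothing to prove.

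I expect the only delicate point to be confirming that Dol'nikov's halving yields $\lfloor m/2\rfloor$ for every $m$, including the parity bookkeeping. This is exactly the computation already done in the proof of Theorem~\ref{thm:max}, so I would cite it rather than redo it.
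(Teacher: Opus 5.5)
Your proposal is correct and follows the paper's proof: a regular $n$-gon with monotonicity gives the upper bound, and two successive applications of the Dol'nikov halving from Theorem~\ref{thm:max} give $\lfloor \lfloor n/2\rfloor/2\rfloor=\lfloor n/4\rfloor$ for the lower bound. Your use of the strict decrease $\diam(X_3)<\diam(X_2)\le a_{s-1}$ also covers the case $\diam(X_2)<a_{s-1}$, which the paper leaves implicit.
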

\begin{proof}
\emph{Lower bound.} 
By Theorem~\ref{thm:max}, we can choose $\left\lfloor \frac{n}{2} \right\rfloor$
  to eliminate all diameters. The diameter of this point set is at most $a_{s-1}$, and thus,
  we can apply Theorem~\ref{thm:max} again.
Then an elementary computation yields that
\[ g(n,\{a_{s-1},a_s\}) \geq \left\lfloor \frac{\lfloor n/2 \rfloor }{2} \right\rfloor
= \left \lfloor \frac{n}{4} \right \rfloor.
\]

\medskip
\emph{Upper bound.} Consider a regular polygon  $X$ with $n$ vertices. By Observation~\ref{obs:monotone}
and Theorem~\ref{thm:max}, we have $g(X,\{a_{s-1},a_s\}) \leq g(X,\{a_s\}) \leq \lfloor n/2 \rfloor$. 
\end{proof}

\subsection{Approximation algorithms} \label{subsec:approx2}

In this subsection we prove:

\begin{theorem} \label{thm:min+max-approx}
  Given a set $X$ of $n$ points in the plane: 
For any $\eps>0$, a $(0.5 -\eps)$-approximation for \textsc{MIS} in $G(X,\{a_1,a_s\})$ can be found
in $2^{O(1/\eps)} n^{O(1)}$ time.
\end{theorem}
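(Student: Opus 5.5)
Write $G=G(X,\{a_1,a_s\})$, $G_1=G(X,\{a_1\})$ and $G_s=G(X,\{a_s\})$. Then $G=G_1\cup G_s$, and $\alpha(G)\le \alpha(G_1)$ because $G$ contains $G_1$.

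The plan is to combine the PTAS for the planar graph $G_1$ from Theorem~\ref{thm:min}(ii) with the structure of the diameter graph used in Theorem~\ref{thm:max-exact}. We lose at most a factor of $2$ when we make an independent set of $G_1$ also independent in $G_s$.

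\textbf{Key step.} Let $I_1$ be any independent set in $G_1$. We show that one can extract $I\subseteq I_1$ that is independent in $G$ with $|I|\ge |I_1|/2$, essentially in linear time.

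The induced graph $G_s[I_1]$ is again a subgraph of the diameter graph. Hence it has at most one odd cycle, and any two of its edges intersect, being pairwise crossing or sharing an endpoint. We will use the following claim: the diameter graph of a planar set $Y$ (edges at distance $\diam(X)$, restricted to $Y$) always has an independent set of size at least $\lfloor |Y|/2\rfloor$, and one of size $\lceil |Y|/2\rceil$ unless it contains an odd cycle spanning everything relevant.

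To justify the claim, first note that non-isolated vertices lie in the hull structure of a graph with pairwise intersecting edges. Such a graph on $m$ vertices is a forest plus at most one odd cycle, so it is bipartite after deleting one vertex. Take the larger color class of the bipartite remainder, together with all isolated vertices. This gives at least $(|Y|-1)/2$ vertices. It is also simpler to invoke the Dol'nikov argument from Theorem~\ref{thm:max}, applied only to the diameter pairs of $X$.

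A cleaner route is to compute a maximum independent set of $G_s[I_1]$ exactly via the dynamic program of Theorem~\ref{thm:max-exact}, in $O(n^2)$ time. This yields $|I|\ge \lfloor |I_1|/2\rfloor$, since every graph with at most one odd cycle has independence ratio at least about $1/2$.

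\textbf{Algorithm and analysis.} The algorithm runs in three steps.
\begin{enumerate}
\item Compute $G_1$ in $O(n\log n)$ time.
\item Run Baker's PTAS with parameter $\eps$ to obtain $I_1$ with $|I_1|\ge (1-\eps)\alpha(G_1)$, in $2^{O(1/\eps)}n^{O(1)}$ time.
\item Compute a maximum independent set $I$ of $G_s[I_1]$ as in Theorem~\ref{thm:max-exact}.
\end{enumerate}
This gives
\[
|I|\ge \frac{|I_1|-1}{2}\ge \frac{(1-\eps)\alpha(G_1)-1}{2}\ge \Bigl(\frac12-\eps\Bigr)\alpha(G)
\]
once $\alpha(G)\ge 1/\eps$.

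When $\alpha(G)<1/\eps$, we use brute force instead. Since $\alpha(G)\ge \Omega(n)$ by Theorem~\ref{thm:min+max}, in that case $n=O(1/\eps)$, so trying all subsets costs $2^{O(1/\eps)}$. This absorbs the additive $-1$.

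\textbf{Main obstacle.} The delicate point is the claim that $G_s[I_1]$ has independence number at least $\lfloor |I_1|/2\rfloor$. This requires that the induced subgraph inherits the structure of at most one odd cycle with all other components trees (it does, since induced subgraphs of such graphs keep at most one cycle). It also requires handling the additive loss, which is covered by the small-case brute force above.
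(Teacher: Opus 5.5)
Your proof is correct and takes the same route as the paper: apply Baker's PTAS to the planar graph $G(X,\{a_1\})$, discard at most half of the resulting set to destroy the diameter pairs, and use $\alpha(G)\le\alpha(G_1)$. You are slightly more careful than the paper, which writes $|X_2|\ge\frac12|X_1|$ where only $\lfloor |X_1|/2\rfloor$ is guaranteed; your brute-force branch for $n=O(1/\eps)$ absorbs that additive loss (decide the branch by testing $n$, which is legitimate since $\alpha(G)=\Omega(n)$, rather than by testing $\alpha(G)$ directly).
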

\begin{proof}
  Let $\opt$ and $\opt_1$ denote the sizes of optimal solutions (for \textsc{MIS})
  in $G(X,\{a_1,a_s\})$ and $G(X,\{a_1\})$, respectively. Clearly, we have $\opt \leq \opt_1$. 
  Recall that by Theorem~\ref{thm:min} a $(1-\eps)$-approximation for \textsc{MIS} in $G(X,\{a_1\})$
  can be found in $2^{O(1/\eps)} n^{O(1)}$ time.

  In the first step, compute such a solution, denoted $X_1$; then $|X_1| \geq (1-\eps) \opt_1$.
  In the second step avoid $\diam(X)=a_s$ in $X_1$, if it exists, by removing at most half of the points
  in $X_1$. We thereby obtain a solution $X_2$ (for \textsc{MIS}) in $G(X,\{a_1,a_s\})$, whose size is
  \[ |X_2| \geq \frac12 |X_1| \geq \frac12 (1-\eps) \opt_1 \geq  \frac12 (1-\eps) \opt. \]
  Redefining $\eps$ yields the claimed approximation ratio.
\end{proof}

It is worth noting that, while we have a PTAS for each of the two problems, \ie,
one for \textsc{MIS} in  $G(X,\{a_1\})$, and one for \textsc{MIS} in  $G(X,\{a_s\})$,
composing the two does not necessarily yield a PTAS for \textsc{MIS} in  $G(X,\{a_1,a_s\})$.

\section{Interval restrictions} \label{sec:interval}

In Sections~\ref{sec:discrete-single} and~\ref{sec:discrete-multiple}
we stayed under the scenario that $D \subseteq A(X)$ is finite.
One can consider distance restrictions in the form of a union of  (forbidden) intervals.
Specifically, let $g(X,\J)$ denote the maximum size of a subset $Y \subseteq X$ avoiding every distance
contained in the union of a family $\J$ of intervals. The next example shows that one can avoid all distances
(\ie, all point-pairs) in a planar set whose convex hull contains a sufficiently large disk, \eg, of radius $1/4$,
by using only two arbitrarily short forbidden intervals.

\begin{observation}\label{obs:interval}
  For every $\eps>0$, there exists $X$ such that $\conv(X)$ contains a disk of radius $1/4$,
  and $g(X,\{[0,\eps], [1,1+\eps]\}) =1$. 
\end{observation}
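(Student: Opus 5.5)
The plan is to take $X$ to be a unit equilateral triangle, blown up into clusters if we want $X$ to be large. The underlying observation is this: a set avoids both intervals exactly when every pair of its points is at distance either $\le \eps$ or in $[1,1+\eps]$. If all pairwise distances of $X$ fall into $[0,\eps]\cup[1,1+\eps]$, then any two distinct points of $X$ form a forbidden pair. Hence $g(X,\{[0,\eps],[1,1+\eps]\})=1$ as soon as $|X|\ge 1$.

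The simplest instance is $X$ the vertex set of an equilateral triangle of side $1$. Every pairwise distance equals $1\in[1,1+\eps]$, so at most one point can be kept. The convex hull is the triangle itself, whose inscribed disk has radius $1/(2\sqrt3)\approx 0.2887>1/4$. This already proves the statement.

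To obtain such $X$ of any prescribed size $n$ (the interesting version), I would proceed as follows. Fix $\delta=\min(\eps,1)/4$ and take an equilateral triangle $T$ with vertices $v_1,v_2,v_3$ and side length $1+2\delta$. Place $\lceil n/3\rceil$ (or $\lfloor n/3\rfloor$) arbitrary points in each disk $B(v_i,\delta/2)$, and let $X$ be their union.
\begin{itemize}
\item Two points in the same cluster are at distance $\le\delta\le\eps$.
\item Two points in different clusters are at distance in $[1+\delta,\,1+3\delta]\subseteq[1,1+\eps]$, by the triangle inequality.
\end{itemize}
So every pair is forbidden and $g(X,\cdot)=1$. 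Including the vertices $v_i$ themselves in $X$ makes $\conv(X)\supseteq T$, and $T$ contains a disk of radius $(1+2\delta)/(2\sqrt3)>1/4$.

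The only point requiring care is matching the constants. The side length must be slightly more than $1$, so that cross-cluster distances never drop below $1$. At the same time the clusters must be small enough that cross distances stay below $1+\eps$ and intra-cluster distances stay below $\eps$. The choice $\delta=\eps/4$ handles both. The inradius bound $1/(2\sqrt3)>1/4$ leaves ample slack for the disk condition.
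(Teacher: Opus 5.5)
Your proposal is correct and matches the paper's argument, which takes $X$ to be points sufficiently close to the three vertices of a unit equilateral triangle; your three-point instance already suffices, since the statement imposes no size requirement on $X$. Your choice of side length $1+2\delta$ with clusters in $B(v_i,\delta/2)$ is in fact more careful than the paper's wording, because clusters around a triangle of side exactly $1$ can produce cross-cluster distances just below $1$, which lie in neither forbidden interval.
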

\begin{proof}
Take a set of points sufficiently close to the three vertices of a unit regular triangle as $X$.
\end{proof}

Interestingly enough, we can give estimates that are much closer to each other for the problem
of reducing the diameter of a point set below a certain threshold. This problem is closely related
to a generalization of Borsuk's problem. To state our result, for any integer $k \geq 2$,
we denote by $d_k$ the smallest positive number such that any set $X \subset \RR^2$
of unit diameter can be decomposed into $k$ subsets of diameters at most $d_k$.

\begin{theorem}  \label{thm:half}
For all integers $k, n \geq 2$, we have $g(n,(d_k,1]) \geq n/k$.
In particular, if $X$ is a set of $n$ points of unit diameter in the plane,
then $g(X,(1/2,1]) \geq n/7$. On the other hand, there exists $X$, $|X|=n$, such that $G(X, (1/2,1]) \leq n/6$. 
\end{theorem}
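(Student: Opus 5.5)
The general bound $g(n,(d_k,1]) \geq n/k$ is a pigeonhole argument. By scaling we may assume $\diam(X)=1$; if the diameter is smaller, we first scale so the diameter becomes $1$. Strictly speaking, the forbidden interval is then relative to the diameter, and this is the reading the second sentence of the statement suggests. By the definition of $d_k$, $X$ can be partitioned into $k$ subsets $X_1,\ldots,X_k$ of diameter at most $d_k$. The largest of them has at least $n/k$ points, and no pair in it is at distance in $(d_k,1]$. Hence $g(X,(d_k,1]) \geq n/k$.

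For the particular case, it suffices to show $d_7 \leq 1/2$, because then $(1/2,1] \subseteq (d_7,1]$ and $g(X,(1/2,1]) \geq g(X,(d_7,1]) \geq n/7$. I would invoke the known result on the generalized Borsuk problem in the plane, namely that a planar set of unit diameter can be cut into $7$ pieces of diameter at most $1/2$. Concretely, one places $X$ in a regular hexagon of width $1$ (Pál's universal cover). One then takes the central regular hexagon of circumradius $1/4$, whose diameter is $1/2$, together with six surrounding congruent pieces. A computation of the vertex coordinates checks that each piece has diameter at most $1/2$; this is the classical value $d_7=1/2$ from the literature. Verifying, or correctly citing, this partition is the main obstacle. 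I would first try to locate and cite the result on $d_7$; if that fails, I would carry out the explicit hexagonal decomposition and check the diameters of the pieces numerically.

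For the upper bound, I would take $X$ to be $n/6$ points clustered near each vertex of a regular hexagon of unit diameter (side $1/2$). Each cluster has arbitrarily small diameter, and the cluster positions are chosen so that points at adjacent vertices are at distance slightly more than $1/2$. Opposite vertices are at distance $1$, and vertices at distance two along the hexagon are at distance $\sqrt3/2$. To keep the diameter exactly $1$, each cluster is placed on a slightly inward-scaled hexagon, while one fixed diametral pair remains at distance $1$. Then any two points from different clusters are at distance in $(1/2,1]$, so an admissible subset lies inside a single cluster. This gives $g(X,(1/2,1]) \leq n/6$, assuming $6 \mid n$ for simplicity.
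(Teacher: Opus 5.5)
Your lower bound is fine. The pigeonhole step is exactly the paper's. Your Pál-hexagon decomposition is a valid self-contained substitute for the paper's citation of Lenz, provided the six outer pieces are cut along the rays through the midpoints of the sides of the outer hexagon. Each piece then contains one outer vertex and two half-sides, and its diameter is the distance $1/2$ between two adjacent side midpoints. Cutting through the outer vertices instead would fail, since a side has length $1/\sqrt3>1/2$.

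The upper-bound construction, however, does not work. In a regular hexagon the ratio of the longest distance (opposite vertices) to the shortest (adjacent vertices) is exactly $2$, and scaling does not change it. Because $(1/2,1]$ is open at $1/2$, you need six cluster centres whose max/min distance ratio is strictly less than $2$. So no similar copy of the regular hexagon can have adjacent distances $>1/2$ while keeping opposite distances $\le 1$.

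Your proposed fix moves in the wrong direction. Fix $A=(1/2,0)$ and $D=(-1/2,0)$, and scale the other four centres by $\lambda<1$. Then $|AB|^2=(1-\lambda+\lambda^2)/4<1/4$ and $|BC|=\lambda/2<1/2$. Two adjacent tiny clusters therefore have all mutual distances below $1/2$ and can be retained together, giving an admissible subset of size $n/3$. Your claim that an admissible subset lies inside a single cluster is false.

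The missing idea is a six-point configuration with ratio below $2$. The paper takes the regular pentagon of unit diameter together with its centre. There the minimum distance is the circumradius $1/(2\sin 72^\circ)\approx 0.526$, the sides are about $0.618$, and the diagonals are $1$, so the ratio is $2\sin 72^\circ\approx 1.902$. Place tiny clusters around these six points and rescale at the end so the diameter is exactly $1$. All inter-cluster distances then lie in $(1/2,1]$, and the bound $n/6$ follows.
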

\begin{proof}
\emph{Lower bound.}
By definition, any set of unit diameter can be covered by $k$ sets of diameters at most $d_k$.
Thus, by the Pigeonhole Principle, at least one of these sets contains at least $\frac{n}{k}$ points.
In particular, by a result of Lenz~\cite{Le56a,Le56b}, any set of unit diameter in the plane can be covered
by $7$ sets of diameter $0.5$; see also~\cite{Fi10,TPV22}. Applying this result to $\conv(X)$, one of the $7$ parts
contains at least $n/7$ points in $X$.

\medskip
\emph{Upper bound.} 
Assume for simplicity that $n$ is divisible by $6$. Consider a regular pentagon of unit diameter and its center.
Note that the minimum distance between these points is
\[ \frac{1}{2 \sin 72^\circ} = 0.526 \ldots. \]
Place $n/6$ points in a small disk of radius $0.001$ centered at each of these six points.
The minimum distance between points in different disks is at least $0.52$ and the upper bound follows. 
\end{proof}

The recent paper of Tolmachev, Protasov and Voronov~\cite{TPV22} collects
the known estimates about the values of $d_k$. According to this, $d_2=1$, $d_3=0.8660\ldots$,
  $d_4 = 0.7071\ldots$, $d_5 \leq 0.5953\ldots$, $d_6 \leq 0.5343\ldots$.
Observe that $d_k$ in the formulation of Theorem~\ref{thm:half} can be replaced by any $d$,
with $d_k \leq d \leq 1$.

\section{Concluding Remarks} \label{sec:remarks}

\textbf{1.} 
The techniques used to handle the planar case immediately extend to $\RR^d$ for any fixed dimension $d$.
Specifically we need  a bound on the chromatic number of $\RR^d$. 
The current best upper and lower bounds are due to  Larman and Rogers~\cite{LR72} and
Raigorodskii~\cite{Ra12}, respectively.
\begin{equation} \label{eq:chi}
  (1.239\ldots + o(1))^d \leq \chi(\RR^d) \leq (3 + o(1))^d.
\end{equation}

The proof of Theorem~\ref{thm:any-d} is analogous to that of Theorem~\ref{thm:unit} and
is therefore omitted. The resulting upper and lower bounds are exponentially  far from  each other.

\begin{theorem} \label{thm:any-d}
  For any finite set $X \subset \RR^d$, with $|X| = n$, and distance $x \in A(X)$
  we have $g(X,\{x\}) \geq c_d \, n$. 
  On the other hand, there exist $X \subset \RR^d$, with $|X| = n$,
  and $x \in A(X)$ such that $g(X,\{x\}) \leq c'_d \, n$.
  Here $c_d$ and $c'_d$ are constants that depend only on $d$.
\end{theorem}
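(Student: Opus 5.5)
The plan is to mirror the proof of Theorem~\ref{thm:unit}, replacing the planar tortoise construction by a measurable $1$-avoiding set in $\RR^d$ of positive density, and the Moser spindle by a finite unit distance graph with independence ratio bounded away from $1$.

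\emph{Lower bound.} After scaling we may assume $x=1$. Pack open balls of diameter $1$, i.e., radius $1/2$, centered at the points of the lattice $2\mathbb{Z}^d$. Points in the same ball are at distance $<1$. Points in different balls are at distance $>2-1=1$. So the union $U$ of these balls contains no two points at distance~$1$. Its density is $\rho_d=\lambda(B_d(1/2))/2^d>0$, which is a constant depending only on $d$; a denser lattice packing would only improve it. Let $P=[0,2)^d$ be a fundamental cell, and translate $X$ by a vector $v$ chosen uniformly from $P$. By periodicity each point $p\in X$ lands in $U$ with probability exactly $\rho_d$. Linearity of expectation then gives a translate with at least $\rho_d n$ points in $U$, and these points form an independent set in $G(X,\{1\})$. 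So $c_d=\rho_d$ works. Alternatively, the Larman--Rogers coloring in~\eqref{eq:chi} gives $c_d=(3+o(1))^{-d}$ via the pigeonhole principle on the largest color class.

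\emph{Upper bound.} Embed one fixed finite unit distance graph in $\RR^d$ (for $d\ge 2$) whose independence ratio is $\rho<1$. A unit equilateral simplex with $d+1$ vertices has $\rho=1/(d+1)$. For $d=1$, two points at distance $1$ give $\rho=1/2$. For a sharper exponential bound one could instead use the finite subgraphs behind Raigorodskii's bound, which have independence ratio at most $(1.239+o(1))^{-d}$; nothing exponential is required by the statement. Take $n/|V|$ widely separated copies, say with pairwise distances between copies exceeding $1$ after placing them far apart, so that no cross-copy pair is at distance $1$ and $1\in A(X)$ remains. Each copy contributes at most $\rho|V|$ points to an independent set, so $g(X,\{1\})\le \rho n$. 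When $n$ is not divisible by $|V|$, pad the construction with one partial copy, giving $c'_d$ slightly adjusted; we can take $c'_d=1/2$ or $\rho+o(1)$.

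The only step needing care is the measure/probability argument in the lower bound: ensuring that the translated set misses boundaries almost surely, which holds because boundaries have measure zero, and that the hitting probability equals the density. Both are routine, so there is no genuine obstacle here.
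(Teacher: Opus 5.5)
Your proposal is correct and is essentially the argument the paper has in mind when it says the proof is analogous to that of Theorem~\ref{thm:unit}. For the lower bound you use a randomly shifted periodic $1$-avoiding set of positive density (or the Larman--Rogers colouring with pigeonhole), and for the upper bound widely separated copies of a fixed finite unit distance graph with independence ratio $<1$ (a simplex, or Raigorodskii's graphs). One negligible slip: for $d=1$ unit distance graphs are unions of paths, so for odd $n$ one always has $g(X,\{1\})\ge (n+1)/2$ and $c'_1=1/2$ is impossible, but taking $c'_d=2/3$ (or the divisibility assumption the paper makes) settles this.
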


\textbf{2.} 
Many interesting questions in relation to closing the remaining gaps remain.
Here we restrict ourselves to the plane:

\begin{enumerate} \itemsep 3pt
  
\item Theorem~\ref{thm:unit-approx} trivially yields a $0.2293$-ratio approximation algorithm
  (since the size of an optimal solution to {\textsc MIS} satisfies $\opt =\alpha(X) \leq n$). 
  A substantially better approximation, perhaps closer to $1/2$, might be possible.
  See also Conjecture~\ref{conj:deg3} below.

\item For a finite unit distance graph $G$ in the plane, how small can the ratio $\alpha(G)/n$ be?
Can it approach  $\rho_{\rm \,Croft}$?
The first question was asked by Erd\H{o}s, see also~\cite{MRVZ25,Sz02}.
  
\item  Find an improved dependence of  $g(n,k)$ in terms of $k$; note that Theorem~\ref{thm:k}
  leaves an exponential gap.
  
\end{enumerate}

\medskip
It is perfectly possible that most of the conjectures listed below might be easier to refute
rather than confirm.

\begin{conjecture} \label{conj:deg3}
  Let $X$ be a set of $n$ points in the plane, with $1 \in A(X)$, such that the minimum degree
  in the corresponding unit distance graph is at least $3$. Then the independence number of
  $G(X,\{1\})$ is at most $c n$, for some constant $c<1$. 
\end{conjecture}

Interestingly enough, the degree condition in Conjecture~\ref{conj:deg3} is necessary
(see the construction below). 
Indeed, there exist unit-distance graphs on $n$ vertices with the minimum degree $2$ and whose
independence ratio is $1 - \Theta(n^{-1/2})$, hence no such constant $c$ exists in that case.

\medskip
\textbf{Construction.} Assume for simplicity that $n=k^2$ is a perfect square and
consider a set $X_1$ of $k$ points randomly and uniformly distributed in a disk $D$ of diameter $0.1$,
and let $X_2$ be an empty set.  
For each pair $a,b \in X_1$, add to $X_2$ the two points at unit distance from $a$ and $b$.
With probability~$1$, we have $|X_1|=k$,  $|X_2|= 2 {k \choose 2} = k^2-k$,
and $X_2$ determines no unit distance.
Thus $X= X_1 \cup X_2$ has exactly $|X_1| +|X_2| =n$ points, and $X_2$ is an independent set
of size $k^2-k= n -\sqrt{n}$ in the unit distance graph $G$ determined by $X$.
By construction, each point in $X_1$ has degree $2(k-1)$ and each point in $X_2$ has degree $2$,
so the minimum degree in $G$ is at least $2$.

A counting argument (omitted here) shows that this bound is asymptotically tight, namely for
a set $X$ of $n$ points, whose unit distance graph $G$ has minimum degree at least $2$, we have
$\alpha(G) \leq n -\Omega(\sqrt{n})$.

\medskip
According to Conjectures~\ref{conj:min+max} and~\ref{conj:largest2} below, the values of $g()$ for some
two distance restrictions are determined by suitable single distance restrictions.  

\begin{conjecture} \label{conj:min+max}
 $g(n,\{a_1,a_s\}) \approx g(n, \{a_1\}) = 5n/16 -O(1)$.
\end{conjecture}

\begin{conjecture} \label{conj:smallest2}
 $g(n,\{a_1,a_2\}) =n/5 -O(1)$.
\end{conjecture}

\begin{conjecture} \label{conj:largest2}
 $g(n,\{a_{s-1},a_s\}) \approx g(n,\{a_s\}) = n/2 -O(1)$.
\end{conjecture}

\begin{conjecture} \label{conj:half}
$ g(n,(1/2,1]) = n/6 - O(1)$. 
\end{conjecture}

\end{document}